\DeclareMathAlphabet{\mathpzc}{OT1}{pzc}{m}{it}
\newtheorem{lemma}{Lemma}[section]
\newenvironment{proof-sketch}{\noindent{\textit{Proof Sketch.}}\hspace*{1em}}{\qed\bigskip}
\crefname{theorem}{Theorem}{Theorems}
\Crefname{lemma}{Lemma}{Lemmas}
\Crefname{figure}{Figure}{Figures}
\newcommand{\nm}{n_{\max}}
\newcommand{\wt}{\widetilde}
\renewcommand{\varepsilon}{\epsilon}
\renewcommand{\tilde}{\wt}
\newcommand{\ID}{\mathsf{ID}}
\newcommand{\ceil}[1]{\left\lceil #1 \right\rceil}
\newcommand{\strongcd}{\mathsf{Strong}\text{-}\mathsf{CD}}
\newcommand{\sendercd}{\mathsf{Sender}\text{-}\mathsf{CD}}
\newcommand{\receivercd}{\mathsf{Receiver}\text{-}\mathsf{CD}}
\newcommand{\nocd}{\mathsf{No}\text{-}\mathsf{CD}}
\newcommand{\idle}{\mathsf{idle}}
\newcommand{\listen}{\mathsf{listen}}
\newcommand{\transmit}{\mathsf{transmit}}
\newcommand{\randomized}{\mathsf{rand}}
\newcommand{\deterministic}{\mathsf{det}}
\DeclareMathOperator*{\E}{{\mathbb{E}}}
\DeclareMathOperator{\poly}{poly}
\definecolor{b2}{RGB}{51,153,255}
\definecolor{mygreen}{RGB}{80,180,0}
\definecolor{yl}{RGB}{255,80,0}
\definecolor{mycy2}{RGB}{255,51,255}
\newcommand*{\RN}[1]{\expandafter\@slowromancap\romannumeral #1@}
\begin{document}

\date{}
\title{The Energy Complexity of Las Vegas Leader Election}

\author{
Yi-Jun Chang 
\\
\normalsize National University of Singapore \\
\normalsize \texttt{cyijun@nus.edu.sg} \\
\and
Shunhua Jiang \\
\normalsize Columbia University\\
\normalsize \texttt{sj3005@columbia.edu} \\
}

\begin{titlepage}
  \maketitle
  \thispagestyle{empty}

\begin{abstract}
We consider the  time (number of communication rounds) and energy (number of non-idle communication rounds per device) complexities of randomized leader election in a multiple-access channel, where the number of devices $n \geq 2$ is \emph{unknown}. 
It is well-known that for  polynomial-time randomized leader election algorithms with success probability $1 - 1/\poly(n)$, the optimal energy complexity is
$\Theta(\log \log^\ast n)$ if receivers can detect collisions, and it is $\Theta(\log^\ast n)$ otherwise.

Without collision detection,
all existing randomized leader election algorithms using $o(\log \log n)$ energy  are \emph{Monte Carlo} in that they might fail with some small probability, and they might consume unbounded energy and never halt when they fail. Although the optimal energy complexity of leader election \emph{appears} to have been settled,
it is still an intriguing open question whether it is possible to attain the optimal $O(\log^\ast n)$ energy complexity by an efficient \emph{Las Vegas} algorithm that never fails.
In this paper we address this fundamental question. 

\begin{description}
    \item[A separation between Monte Carlo and Las Vegas algorithms:] 
Without collision detection,
we prove that any Las Vegas leader election algorithm $\mathcal{A}$ with \emph{finite} expected time complexity must use $\Omega(\log \log n)$ energy, establishing a large separation between Monte Carlo and Las Vegas algorithms.
Our lower bound is \emph{tight}, matching the energy complexity of an existing leader election algorithm that finishes in $O(\log n)$ time and uses $O(\log \log n)$ energy in expectation.
    \item[An exponential improvement with sender collision detection:] 
In the setting where transmitters can detect collisions, we design a new leader election algorithm  that finishes in $O(\log^{1+\epsilon} n)$ time and uses $O(\epsilon^{-1} \log \log \log n)$ energy in expectation, showing that sender collision detection helps improve the energy complexity \emph{exponentially}. Before this  work, it was only known that sender collision detection is  helpful for \emph{deterministic} leader election.

    \item[An optimal deterministic leader election algorithm:] As a side result, via derandomization, we show a new deterministic leader election algorithm that takes $O\left(n \log \frac{N}{n}\right)$ time and $O\left(\log \frac{N}{n}\right)$ energy to elect a leader from $n$ devices, where each device has a unique identifier in $[N]$. The algorithm is \emph{simultaneously} time-optimal and energy-optimal, matching  existing $\Omega\left(n \log \frac{N}{n}\right)$ time lower bound and $\Omega\left(\log \frac{N}{n}\right)$ energy lower bound.
\end{description}

\end{abstract}

\end{titlepage}

{
\pagenumbering{gobble}
\hypersetup{linkcolor=black}
\doublespacing
\tableofcontents
\singlespacing
}

\newpage
\pagenumbering{arabic}

\section{Introduction}

\emph{Leader election} is one of the most central problems of distributed computing. In a network of an unknown number $n$ of devices communicating via a shared communication channel, the goal of leader election is to  have exactly one device in the network identify itself as the \emph{leader}, and all other devices identify themselves as \emph{non-leaders}. 

Leader election has a wide range of applications, as it captures the classic \emph{contention resolution} problem, where several processors need temporary and exclusive access to a shared resource.
Leader election is also used to solve the \emph{wake-up} problem~\cite{Gasieniec2001wakeup,Newport14}, whose the goal is to wake-up all processors in a completely connected broadcast system, in which an unknown number of processors are awake spontaneously and they have to wake-up the remaining sleeping processors.

We focus on \emph{single-hop} networks (all devices communicating via a shared communication channel) in the \emph{static} scenario (all devices start at the same time). 
Leader election protocols in single-hop networks are used as communication primitives in algorithms for more sophisticated distributed tasks in multi-hop  networks~\cite{bar1992time,Chang18broadcast}. Leader election protocols in the static setting  are  useful building blocks in the design of  contention resolution protocols in the \emph{dynamic} setting where the devices have different starting time~\cite{BenderFGY16,BenderKPY16} by batch processing.

\subsection{The Multiple-access Channel Model}

In our model, an unknown number $n$ of devices connect to a \emph{multiple-access channel}. The communication proceeds in synchronous rounds and all devices have an agreed-upon time zero.
In each communication round, a device may choose to  transmit a message, listen to the channel, or stay idle.

If more than one device simultaneously transmit a message in a round, then a \emph{collision} occurs. Listeners only receive messages from collision-free transmissions. There are four variants~\cite{ChangKPWZ17} of the model based on the collision detection ability of transmitters (distinguishing between successful transmission and  collision) and listeners (distinguishing between silence and collision).

\begin{description}
\item[$\strongcd$.] Transmitters and listeners receive one of the three  feedback:
(i)~silence, if zero devices transmit, 
(ii)~collision, if at least two devices transmit, or
(iii)~a message $m$, if exactly one device transmits.

\item[$\sendercd$.] Transmitters and listeners receive one of the two   feedback:
(i)~silence, if zero or at least two devices transmit, or
(ii)~a message $m$, if exactly one device transmits.

\item[$\receivercd$.] Transmitters receive no feedback.  Listeners receive one of the three  feedback:
(i)~silence, if zero devices transmit, 
(ii)~collision, if at least two devices transmit, or
(iii)~a message $m$, if exactly one device transmits.

\item[$\nocd$.] Transmitters receive no feedback. Listeners receive one of the two   feedback:
(i)~silence, if zero or at least two devices transmit, or
(ii)~a message $m$, if exactly one device transmits.
\end{description}

We distinguish between randomized and  deterministic models. In the randomized setting,  the devices are \emph{anonymous} in that they do not have unique identifiers and run the same algorithm, but they may break symmetry using their private random bits. In the deterministic setting,  each device is initially equipped with a unique identifier from an ID space $[N]$, where $N$ is \emph{global knowledge}. 
 Unless otherwise stated, we assume that the number of devices $n \geq 2$ is \emph{unknown}.

The goal of leader election is to  have exactly one device in the network identify itself as the {leader}, and all other devices identify themselves as {non-leaders}. We require that the communication protocol ends when the leader
sends a message while every non-leader listens to the channel, so all devices terminate in the same round.

\paragraph{Complexity measures.} 

Traditionally, the leader election problem has been studied from the context of optimizing the \emph{time complexity}, which is defined as the number of communication rounds needed to solve the problem.  More recently, there has been a growing interest~\cite{BenderKPY16,Chang2021detLE,ChangKPWZ17,caragiannis2005basic,lavault2007quasi,JurdzinskiKZ02cocoon,JurdzinskiKZ02podc,nakano2000randomized} in the  \emph{energy complexity} of leader election, which is defined as the maximum number of \emph{non-idle} rounds per device, over all devices. That is, each transmitting or listening round costs one unit of energy.  The study of energy complexity is motivated by the fact that many small mobile battery-powered devices are operated under a limited energy constraint. These devices may save energy by turning off their transceiver and entering a low-power sleep mode. As a large fraction of energy consumption of these devices are often spent on sending and receiving packets, the energy complexity of an algorithm approximates the actual energy usage of a device.
In applied research, \emph{idle listening} (transceiver is active but no data is received) has been identified as a major source of energy inefficiency in \emph{wireless sensor networks}, and there is a large body of work studying strategies for minimizing the number of transmission and idle listening~\cite{sadler2005fundamentals,miller2005mac,zhang2012mili,wang2012energy}.

\subsection{Prior Work}

For the time complexity of leader election,  Willard~\cite{Willard86} showed that expected $\Theta(\log\log n)$ time is necessary and sufficient
for leader election in $\receivercd$. More generally, Nakano and Olariu~\cite{nakano2002uniform} showed that the optimal time complexity of leader election in $\receivercd$ is $\Theta(\log\log n + \log f^{-1})$ if the maximum allowed failure probability is $f$. 

For the case an upper bound $\nm \geq n$ on the unknown network size $n$ is \emph{known} to all devices, it is well-known that  leader election can be solved using the \emph{decay} algorithm of Bar-Yehuda, Goldreich, and  Itai~\cite{bar1992time} in worst-case $O(\log  \nm \log f^{-1})$ time with success probability $1-f$~\cite{bar1992time,Gasieniec2001wakeup,JurdzinskiS02} in $\nocd$. The algorithm simply tries the transmission probability $2^{-i}$ for $O(f^{-1})$ times, for all  integers $1 \leq i  \leq \log \nm$ until a successful transmission occurs.
On the lower bound side, Jurdzi\'{n}ski and  Stachowiak~\cite{JurdzinskiS02} showed an $\Omega\left(\frac{\log  \nm \log f^{-1}}{\log \log \nm + \log \log f^{-1}}\right)$ time lower bound.
Later, Farach-Colton,   Fernandes, and   Mosteiro~\cite{Farach-ColtonFM06} showed the tight $\Omega(\log  \nm \log f^{-1})$ time lower bound for  \emph{oblivious} algorithms, in which there is a fixed sequence of transmission probabilities $(p_1, p_2, \ldots)$ such that if there has been no collision-free transmission, then  all devices transmit with the same probability $p_{i}$ in the $i$th round, using fresh randomness independently. 
Based on a technique of Alon, Bar-Noy, Linial, and Peleg~\cite{ablp91}, Newport~\cite{Newport14} showed an $\Omega(\log^2  \nm)$ time lower bound for the case $f = 1/ \poly(\nm)$ that applied to all algorithms. Very recently, the time complexity of leader election in which the algorithm is provided an arbitrary distribution of the possible network sizes $n$ was studied in~\cite{Gilbert2021}.


For the energy complexity of leader election algorithms, 
Lavault, Marckert, and Ravelomanana~\cite{lavault2007quasi} designed a leader election algorithm that finishes in expected $O(\log n)$ time  and uses expected $O(\log \log n)$ energy  in $\nocd$. Its expected time complexity matches the  $\Omega(\log n)$ lower bound for expected time of Newport~\cite{Newport14}. Subsequently, the algorithm of~\cite{lavault2007quasi} was applied to finding an estimate $\tilde{n}$ of $n$~\cite{vlady2016time}.

After a sequence of research~\cite{BenderKPY16,ChangKPWZ17,caragiannis2005basic,JurdzinskiKZ02podc,JurdzinskiKZ02cocoon}, it is now known~\cite{ChangKPWZ17} that for  polynomial-time randomized leader election algorithms with success probability $1 - 1/\poly(n)$, the optimal energy complexity is
$\Theta(\log \log^\ast n)$ if listeners can detect collisions ($\strongcd$ and $\receivercd$), and it is $\Theta(\log^\ast n)$ otherwise  ($\sendercd$ and $\nocd$).  

The energy complexity has also been studied in multi-hop networks~\cite{augustine2022distributed,berenbrink2009energy,barenboim_et_al:LIPIcs.DISC.2021.10,Chang18broadcast,Chang20bfs,ChatterjeeGP20,dani2021wake,dufoulon2022sleeping,gasieniec2007energy,Klonowski2018}. Optimization problems related to energy efficiency in  multi-hop networks were considered in~\cite{ambuhl2005optimal,Ambuhl2008,kirousis2000power}.

\subsection{Monte Carlo and Las Vegas Algorithms}\label{sect-two-types-rand-alg}

Although the optimal energy complexity of leader election \emph{appears} to have been settled due to the work of~\cite{ChangKPWZ17}, we observe that several existing randomized leader election protocols, including the ones in~\cite{ChangKPWZ17}, are \emph{Monte Carlo} in that they might fail with some small probability, and they might consume unbounded energy and never halt when they fail. 
In particular, without collision detection ($\nocd$), all existing randomized leader election algorithms~\cite{ChangKPWZ17,JurdzinskiKZ02cocoon} using $o(\log \log n)$ energy have this issue, and it is not even known if these algorithms have \emph{finite} expected time complexity.

It remains as an intriguing open question whether it is possible to attain the optimal $O(\log^\ast n)$ energy complexity proved in~\cite{ChangKPWZ17} by an efficient \emph{Las Vegas} algorithm that never fail.

It is tempting to guess that we might be able to transform existing Monte Carlo leader election algorithms into  Las Vegas algorithms  without worsening the asymptotic time and energy complexities by too much, but designing such a  transformation is actually very challenging when the number of devices $n$ is \emph{unknown}.

To explain the issue, consider the following simple  Monte Carlo leader election protocol that finishes in $O(\log^2 n)$ time with probability $1 - 1/\poly(n)$ when it is run on a network of $n$ devices. For $i = 1, 2, \ldots$,  the $i$th iteration consists of $C \cdot i$ rounds, where $C > 0$ is some constant. In each round each device $v$ transmits with probability  $2^{-i}$. All devices that are not transmitters listen to the channel, so a leader is elected once the number of transmitting devices is exactly one in a round.

Let $i^\ast = \lfloor \log n \rfloor$. If the number of devices is $n$, then the success probability in each round in the $i^\ast$th iteration is $\Omega(1)$, implying that a leader is elected by the $i^\ast$th iteration with probability $1 - n^{-\Omega(C)}$.
Hence the algorithm finishes in $\sum_{j=1}^{i^\ast} C \cdot i = O(\log^2 n)$ time with probability $1 - n^{-\Omega(C)}$.
The \emph{expected} time complexity of this protocol is however \emph{infinite} because in an extremely unlucky event that a leader is not elected within the first $O(\log n)$ iterations,  with high probability the protocol will run forever.

A natural attempt to fix the issue of infinite expected time complexity is to restart the protocol when it fails, but this strategy does not work as there is no mechanism for a device to detect that the algorithm has already failed! Since the number of devices $n$ is unknown, we are not able to set a time limit $T(n)$ and restart the protocol once  the number of rounds exceeds  $T(n)$ does not work, as the devices cannot calculate $T(n)$ if $n$ is not known. 

The issue is even more serious if there is no collision detection.  In the $\nocd$ model, each transmitter does not know if the message is successfully transmitted and each listener cannot distinguish between collision and silence, so there is no way for a device to learn anything about the number of devices $n$ given that no successful transmission (the number of transmitters is exactly one and the number of listeners is at least one) occurs.

\subsection{New Results}

The main objective of this paper is to better understand the strange gap between Monte Carlo and Las Vegas complexities of leader election discussed above. We focus on the following fundamental question: Is it possible to attain the optimal $O(\log^\ast n)$ energy bound proved in~\cite{ChangKPWZ17} by an efficient \emph{Las Vegas} algorithm in the $\nocd$ model? 

\paragraph{A separation between Monte Carlo and Las Vegas algorithms.}  
Surprisingly, we show that for any leader election algorithm $\mathcal{A}$ with a \emph{finite} expected time complexity in the $\nocd$ model, it is necessary that $\mathcal{A}$ uses $\Omega(\log \log n)$ energy, establishing a large separation between Monte Carlo and Las Vegas algorithms and answering the above question in the negative.

\begin{restatable}[Energy lower bound for Las Vegas algorithms]{theorem}{thmlbfinite}
\label{thm:energy_lb_finite_time}
Let $\mathcal{A}$ be a randomized leader election algorithm in the $\nocd$ model. We write $T(n)$ and $E(n)$ to denote the expected time and energy complexities of $\mathcal{A}$.
Suppose there is some integer $n^\ast \geq 2$ such the expected time $T(n^\ast)$ of the algorithm $\mathcal{A}$ when running on $n = n^\ast$ devices is finite. Then there exist infinite number of network sizes $n$ such that  $E(n) = \Omega(\log \log n)$.
\end{restatable}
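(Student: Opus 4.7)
The plan is to (i) reduce to oblivious algorithms, (ii) derive $\sum_r(p_r+q_r)=\infty$ from $T(n^\ast)<\infty$, and (iii) perform a ``useful round'' analysis combined with a dichotomy on $S:=\{n:E(n)<\infty\}$. For the reduction, in $\nocd$ transmitters receive no feedback and listeners cannot distinguish silence from collisions, so before the first successful transmission a device obtains no channel information; combined with anonymity this lets me assume WLOG that each device independently samples its entire action sequence (transmit/listen/idle) from a common distribution, and (by a further standard reduction) that the rounds are mutually independent within a device. Let $p_r$ and $q_r$ denote the marginal probabilities of transmitting and listening in round $r$; the termination probability in round $r$ on $n$ devices is then $t_r(n)=np_rq_r^{n-1}$ (one transmitter, $n-1$ listeners). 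Since $T(n^\ast)<\infty$ implies almost sure termination, Borel--Cantelli on the independent events ``round $r$ succeeds'' yields $\sum_rt_r(n^\ast)=\infty$; because $t_r(n^\ast)\le n^\ast p_r$ this gives $\sum_rp_r=\infty$, hence $\sum_r(p_r+q_r)=\infty$.

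For any $n$, if the algorithm does not almost surely terminate on $n$ then the survival probability $\pi_r(n):=\prob[\tau>r]$ stays bounded below by some $c>0$, giving $E(n)=\sum_r(p_r+q_r)\pi_r(n)\ge c\sum_r(p_r+q_r)=\infty$. Therefore, if $\mathbb{N}\setminus S$ is already infinite the theorem is immediate; otherwise I may assume $S$ is cofinite, meaning the algorithm a.s.\ terminates on every sufficiently large $n$ and $\sum_rt_r(n)=\infty$ there.

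The heart of the argument is this: a round $r$ contributes a constant amount to $t_r(n)$ only if $np_r=\Theta(1)$ \emph{and} $q_r\ge 1-O(1/n)$, and for $n$ sufficiently large the second condition forces $q_r\ge 1/2$, hence $\Omega(1)$ expected energy per device in that round. Enumerate the ``listen-heavy'' rounds $\{r:q_r\ge 1/2\}$ as $r_1<r_2<\cdots$ with scales $\sigma_j:=1/p_{r_j}$. For each $n\in S$, almost sure termination forces some $\sigma_j$ within a constant factor of $n$; let $j(n)$ be the smallest such index. Since all earlier listen-heavy rounds target scales far from $n$, the algorithm on $n$ devices is still running (up to $o(1)$) when it encounters each of them, and contributes $\Omega(1)$ energy each; hence $E(n)=\Omega(j(n))$.

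Finally, I must show $j(n)=\Omega(\log\log n)$ for infinitely many $n\in S$. If the scales $\sigma_j$ instead grew \emph{super}-doubly-exponentially, then in each resulting huge gap $[C\sigma_{j-1},\sigma_j/C]$ every contribution $t_{r_{j'}}(n)=(n/\sigma_{j'})q_{r_{j'}}^{n-1}$ would be tiny (decaying at least like $e^{-\Omega(n/\sigma_{j'})}$ once $n/\sigma_{j'}$ leaves the window $\Theta(1)$), so $\sum_rt_r(n)$ would \emph{converge}, violating almost sure termination and placing infinitely many $n$ outside $S$---contradicting cofiniteness. Consequently $\sigma_j\le 2^{2^{O(j)}}$, whence $j(n)=\Omega(\log\log n)$ for every $n\in S$, and the theorem follows. \textbf{The main obstacle} is this last step: carefully handling rounds with intermediate $q_r$ (neither listen-heavy nor extremely close to $1$), and ruling out clever algorithms that spread the termination probability across many moderately useful rounds to beat the doubly-exponential bound on scales.
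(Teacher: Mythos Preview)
Your proposal has a genuine gap at the very first step. You claim that ``by a further standard reduction'' one may assume the rounds are mutually independent within a device, i.e., that the algorithm is \emph{oblivious} with fixed per-round probabilities $(p_r,q_r)$. This reduction is not standard, and the paper's surrounding discussion makes clear why: the lower bound of Farach-Colton, Fernandes, and Mosteiro is explicitly stated for oblivious algorithms only, while Newport's later bound is singled out precisely because it applies to \emph{all} algorithms. In the $\nocd$ model a device's action sequence is indeed a function of its private randomness alone, so it is drawn from a fixed distribution $\mathcal{D}$ over sequences; but nothing forces $\mathcal{D}$ to be a product measure. Replacing $\mathcal{D}$ by its product of marginals keeps each $p_r,q_r$ (hence the expected per-device energy $\sum_r(p_r+q_r)$) unchanged, but there is no reason the resulting algorithm still has finite expected time on $n^\ast$ devices, nor that $E(n)$ does not increase. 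Everything downstream---your Borel--Cantelli step, the formula $t_r(n)=np_rq_r^{n-1}$, and the identity $E(n)=\sum_r(p_r+q_r)\pi_r(n)$---depends on this independence. For instance, without independence, almost-sure termination only gives $\sum_r \Pr[\text{success in round }r]\ge 1$, not $=\infty$, so you cannot conclude $\sum_r p_r=\infty$.

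Even granting the reduction, the step you yourself flag as ``the main obstacle'' is a real one: bounding the growth of the scales $\sigma_j$ requires controlling the contribution of rounds where $q_r$ is neither close to $1$ nor small, and your sketch does not do this. The paper sidesteps both difficulties with a completely different argument. It never reduces to oblivious algorithms; instead, it derandomizes: from $T(n^\ast)<\infty$ it extracts infinitely many $t$ with failure probability at most $1/t$ on $n^\ast$ devices, sets $N$ so that $\binom{N}{n^\ast}<t/4$, and union-bounds over all size-$n^\ast$ subsets of $[N]$ to obtain a deterministic algorithm. The known deterministic $\nocd$ energy lower bound $\Omega(\log(N/n^\ast))$ then yields $\Omega(\log t)$ energy in the first $t$ rounds under the all-silent transcript. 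Finally, Newport's $\Omega(\log n)$ time lower bound supplies, for each such $t$, a network size $n$ with $t=\Theta(\log n)$ on which the all-silent transcript occurs with constant probability, giving $E(n)=\Omega(\log\log n)$.
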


Our lower bound is very strong in that the energy lower bound $\Omega(\log \log n)$ holds even if there is just \emph{one} network size $n^\ast$ such that the algorithm $\mathcal{A}$ has \emph{finite} expected time complexity when it is run on a network of $n^\ast$  devices.
Even allowing exponential time, our lower bound still rules out the possibility of having a Las Vegas leader election algorithm in $\nocd$ that uses $o(\log \log n)$ energy. 

Our lower bound is \emph{tight} in that it matches the energy complexity of the existing Las Vegas leader election algorithm of~\cite{lavault2007quasi}: There is a leader election algorithm that finishes in time $O(\log n)$ and energy $O(\log \log n)$ in expectation in the $\nocd$ model. As the expected time complexity $O(\log n)$ is already optimal due to the  lower bound  $\Omega(\log n)$ in~\cite{Newport14}, our result implies that the algorithm of~\cite{lavault2007quasi} is simultaneously time-optimal and energy-optimal.

\paragraph{An exponential improvement with sender collision detection.}  We design a new leader election algorithm in the $\sendercd$ model that finishes in $O(\log^{1+\epsilon} n)$ time and uses $O(\epsilon^{-1} \log \log \log n)$ energy in expectation, giving an \emph{exponential improvement} over the previous Las Vegas algorithm of~\cite{lavault2007quasi}, at the cost of slightly increasing the time complexity from $O(\log n)$ to $O(\log^{1+\epsilon} n)$.

\begin{restatable}[An exponential improvement in energy complexity]{theorem}{thmcd}
\label{thm:algo_sendercd}
For any $0 < \epsilon < 1$, there is an algorithm $\mathcal{A}$ in the $\sendercd$ model that elects a leader in expected $O(\log^{1+\epsilon} n)$ time and using expected $O(\epsilon^{-1} \log \log \log n)$ energy.
\end{restatable}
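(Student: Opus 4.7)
The plan is to apply the Monte-Carlo-to-Las-Vegas paradigm: design an energy-efficient Monte Carlo leader election subroutine and wrap it with a sender-CD verification round so that failed attempts can be safely retried without any extra listener energy beyond the verification itself. Verification works as follows: every candidate leader transmits and every non-candidate listens once; if sender CD reports a unique transmission, the transmitter becomes leader and the listeners (who hear a message) terminate as non-leaders; otherwise candidates learn of the collision from sender CD and listeners observe silence, and both sides default to ``try again.'' Because the listener feedback for ``zero candidates'' and ``$\geq 2$ candidates'' is identical, no extra signalling is required to keep devices in sync.

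The Monte Carlo subroutine is an iterated nested decay. I run attempts $j = 1, 2, \ldots$, where attempt $j$ performs decay sub-rounds for $i = 1, \ldots, s_j$ with every active device transmitting with probability $2^{-i}$ in sub-round $i$. A transmitter learns via sender CD whether it was unique in its sub-round and, if so, marks itself a candidate for the upcoming verification. The thresholds grow doubly-exponentially, $s_j = 2^{2^{\Theta(\epsilon) j}}$, so that the smallest $j^\ast$ with $s_{j^\ast} \geq \log n$ satisfies $j^\ast = O(\epsilon^{-1} \log \log \log n)$ and $s_{j^\ast} = O(\log^{1+\epsilon} n)$. Since the attempts are nested (each attempt $j$ retries every earlier $i$ together with its new slice), every attempt $j \geq j^\ast$ already includes the critical band of sub-rounds with $i \approx \log n$, where transmission probabilities are $\Theta(1/n)$ and the expected number of candidates emitted is $\Theta(1)$.

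The central technical claim is that each attempt $j \geq j^\ast$ passes verification with probability $\Omega(1)$, uniformly in $n$. This amounts to showing that the number of sub-rounds with a unique transmitter has a constant probability mass at $1$: a consequence of Poisson approximation (Le~Cam) for a sum of independent Bernoullis whose total expectation is $\Theta(1)$ and whose mass is spread over an $O(1)$-size window of sub-rounds around $i=\log n$. Granting this, each attempt costs $\sum_{i \leq s_j} 2^{-i} + 1 = O(1)$ expected energy (geometric sum of transmission probabilities plus one verification), and the expected number of attempts before success is $O(j^\ast)$, yielding expected energy $O(\epsilon^{-1} \log \log \log n)$. Similarly, $\sum_{j \leq j^\ast} s_j = O(s_{j^\ast}) = O(\log^{1+\epsilon} n)$ gives the expected time bound.

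The main obstacle is two-fold. First, the constant-probability claim must be proved uniformly over $n$; the Bernoulli probabilities near $i=\log n$ are bounded away from $0$ and $1$, so Le~Cam must be applied with care, and one must verify that no single sub-round dominates the candidate mass. Second, the naive tail $\sum_{k \geq 1} (1-c)^k s_{j^\ast+k}$ diverges because $s_{j^\ast+k}$ grows like $s_{j^\ast}^{(2^\epsilon)^k}$; I cap it via a deterministic periodic restart keyed to the shared round counter, so that every device simultaneously re-initialises after a predetermined number of attempts. The restart period grows geometrically across outer epochs, contributing both the $\epsilon^{-1}$ factor in the energy bound (from the extra epochs traversed before the correct restart period is reached) and the mild $\log^\epsilon n$ overshoot in the time bound (from the last epoch before success).
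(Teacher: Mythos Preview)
Your scheme has a genuine gap in the time-tail argument that the ``periodic restart'' cannot close.

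The mismatch is between how fast the failure probability decays and how fast the attempt lengths grow. Each verification round costs every device one unit of energy and succeeds only with constant probability; within the target budget of $O(\epsilon^{-1}\log\log\log n)$ energy you can therefore afford only $O(\epsilon^{-1}\log\log\log n)$ verifications in total, so after all of them the failure probability is still at least $(1-c)^{O(\epsilon^{-1}\log\log\log n)} = (\log\log n)^{-O(\epsilon^{-1})}$. But the attempt lengths satisfy $s_{j^\ast+k}=2^{(1+\epsilon)^k\log\log n}$, so the contribution of attempt $j^\ast+k$ to the expected time is $(1-c)^{k}\cdot 2^{(1+\epsilon)^k\log\log n}$, and this diverges. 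Any restart schedule faces the same arithmetic: either it inserts $\Omega(\log\log n)$ extra verifications (destroying the energy bound --- this is exactly how one recovers the $O(\log\log n)$-energy $\nocd$ algorithm, not an improvement over it) or it leaves the tail divergent. A mechanism that yields only constant success probability per $O(1)$-energy attempt is inherently too weak when attempt lengths grow doubly exponentially.

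The paper avoids this by arranging each iteration $i$ to succeed with probability $1-O(d_i^{-2})$ rather than a fixed constant, and it does so with two ingredients your proposal lacks. For moderate and large $n$, the basic subroutine of \cref{lem:basic_alg2} achieves failure probability $2^{-c}$ at expected energy $O(c/\tilde n)$ per device, not $O(1)$, so $c$ can be pushed up to $\Theta(\log d_i)$ essentially for free when $\tilde n$ is large; summed over all $\tilde n=2^j$ and all iterations the total energy from these calls is $O(1)$. For small $n$ (where $c/\tilde n$ would be expensive) the paper switches to a dedicated $\sendercd$ subroutine (\cref{lem:primitive-sender}) that assigns random IDs in a space of size $d_i^{4}$ and runs a \emph{deterministic} $\sendercd$ leader election; this attains failure probability $d_i^{-3}$ using only $O(\log\log d_i)$ energy. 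It is precisely this second ingredient --- getting polynomially small failure probability in $d_i$ with triply-logarithmic energy when $n$ is small --- that your single-round verification cannot substitute for.
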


A fundamental problem in the study of multiple-access channels is to determine the value of collision detection.
It is well-known that  the ability for listeners to detect collision is very helpful in the design of randomized leader election algorithm in that the ability to distinguish between collision and silence allows the devices to perform an exponential search to estimate the network size $n$ efficiently~\cite{Willard86}.

Prior to this work, existing results suggested that the ability for transmitters to detect collision  does not seem to help in the randomized setting. Indeed, for polynomial-time Monte Carlo leader election algorithms, it was shown in~\cite{ChangKPWZ17} that $\Theta(\log^\ast n)$ is a tight energy bound in $\nocd$ and $\sendercd$ and  $\Theta(\log \log^\ast n)$ is a tight energy bound in $\receivercd$ and $\strongcd$, so it appears that the ability for transmitters to detect collision does not matter.

Our result shows that the  ability for transmitters to detect collision  helps improve the energy complexity {exponentially} for Las Vegas algorithms, giving the first example showing that the ability for transmitters to detect collision is valuable in the design of randomized algorithms.

We summarize our results for Las Vegas leader election algorithms in \cref{tab:expected}.

\begin{table*}[!ht]
\centering
\begin{tabular}{|l |l | l | l | l | l|}
\multicolumn{1}{l}{\bf Model} & 
\multicolumn{1}{l}{\bf Time} & 
\multicolumn{1}{l}{\bf Energy} & 
\multicolumn{1}{l}{\bf Type} &
\multicolumn{1}{l}{\bf Reference}
\\ \hline
\multirow{2}{*}{$\nocd$/$\sendercd$} & $n^{o(1)}$ & $ O(\log^* n)$  & Monte Carlo & \multirow{2}{*}{\cite{ChangKPWZ17}}
\\ \cline{2-4}
& $n^{O(1)}$ & $ \Omega(\log^* n)$  & Monte Carlo & 
\\ \hline
$\nocd$/$\sendercd$ & $O(\log n)$ & $O(\log \log n)$ & Las Vegas & \cite{lavault2007quasi}
\\ \hline 
$\nocd$ & $\Omega(\log n)$ & any & Las Vegas & \cite{Newport14}
\\ \hline 
$\nocd$ & any & $\Omega(\log \log n)$ & Las Vegas & \cref{thm:energy_lb_finite_time}
\\ \hline
$\sendercd$ & $O(\log^{1+\epsilon} n)$ & $O(\epsilon^{-1} \log \log \log n)$ & Las Vegas & \cref{thm:algo_sendercd}
\\ \hline
    \end{tabular}
    \caption{Old and new results on Las Vegas leader election algorithms. Here ``Monte Carlo'' indicates that the time and energy bounds hold with probability $1 - 1/\poly(n)$ and ``Las Vegas'' indicates that the time and energy bounds hold in expectation. }
    \label{tab:expected}
\end{table*}

\paragraph{An optimal deterministic leader election algorithm.} Recently, a systematic study of time-energy tradeoffs for deterministic leader election was done in~\cite{Chang2021detLE}. Due to the result of~\cite{Chang2021detLE},  for the three models $\strongcd$, $\receivercd$, and $\sendercd$, tight or nearly tight time and energy bounds in terms of the number of devices $n$ and  the size of ID space $N$ were known for deterministic leader election. 

The last missing piece in the puzzle is the $\nocd$ model, where the current best deterministic $\nocd$ algorithm takes $O(N)$ time and $O\left(\log \frac{N}{n}\right)$ energy~\cite{Chang2021detLE} and the current best lower bounds are $\Omega\left(n \log \frac{N}{n}\right)$ time~\cite{clementi2003distributed} and $\Omega\left(\log \frac{N}{n}\right)$ energy~\cite{Chang2021detLE}. 

By a derandomization of a subroutine that we use in our randomized Las Vegas algorithms, we show an optimal deterministic $\nocd$ algorithm that takes $O\left(n \log \frac{N}{n}\right)$ time~\cite{clementi2003distributed} and $O\left(\log \frac{N}{n}\right)$ energy, settling the optimal complexities of leader election in the deterministic $\nocd$ model.

\begin{restatable}[Optimal deterministic leader election]{theorem}{thmoptdet}
\label{thm:optimal_algo_no_cd}
Suppose that the size  $N$ of the ID space $[N]$  and an estimate $\tilde{n}$ of the  number of devices $n$ such that  $\tilde{n}/2 < n \leq \tilde{n}$ are both known to all devices. There is a deterministic leader election algorithm in the $\nocd$ model with time complexity $T = O\left(n \log \frac{N}{n}\right)$ and energy complexity $E  = O\left(\log \frac{N}{n}\right)$.
\end{restatable}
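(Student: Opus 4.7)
\begin{proof-sketch}
Our strategy is to derandomize the isolation subroutine used in the Las Vegas algorithms of this paper: we replace its random transmissions with a deterministic \emph{low-weight selector} and arrange the listening schedule so that every device is active in only $O(\log(N/\tilde{n}))$ rounds.

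The main combinatorial tool is an $(N, \tilde{n})$-selector $\mathcal{F} = (F_1, \ldots, F_m)$, i.e., a family of subsets of $[N]$ such that for every $S \subseteq [N]$ with $|S| \leq \tilde{n}$ some index $i$ yields $|F_i \cap S| = 1$. The classical probabilistic construction---include each element in each $F_i$ independently with probability $1/\tilde{n}$ and take $m = \Theta(\tilde{n} \log (N / \tilde{n}))$---produces such a selector in which, with positive probability, every $x \in [N]$ also lies in at most $O(\log(N/\tilde{n}))$ of the $F_i$ (by a Chernoff bound on the resulting $\binomial(m, 1/\tilde{n})$ count). Such an $\mathcal{F}$ therefore exists and can be hard-wired into the protocol. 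In round $i$ each device $x$ transmits its identifier iff $x \in F_i$, so each device transmits in $O(\log(N/n))$ rounds and some \emph{isolating} round $i^\star$ has $F_{i^\star} \cap S = \{x^\star\}$ for some $x^\star \in S$.

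To keep the listening energy at $O(\log(N/n))$ we pair $\mathcal{F}$ with a companion low-weight family $\mathcal{G} = (G_1, \ldots, G_m)$ having the \emph{coverage} property: for every $S$ there is an isolating round $i^\star$ in which $G_{i^\star}$ contains at least one non-singleton element of $S$. Device $x$ listens in round $i$ iff $x \in G_i \setminus F_i$. Any listener that hears a unique identifier $y$ rebroadcasts $y$ in an immediately following ACK sub-round while the claimed leader listens; $x^\star$ thus receives confirmation of its role, and within an additional $O(\log(N/n))$ propagation rounds (scheduled by a second selector restricted to the non-leaders) every device learns the identity of $x^\star$. The joint construction of $\mathcal{F}$ and $\mathcal{G}$ is obtained by sampling $(F_i, G_i)$ in correlated pairs and union-bounding over the $\binom{N}{\leq \tilde{n}}$ possible sets $S$; the union bound is tolerated because $m$ is of order $\log \binom{N}{\tilde{n}} = \Theta(\tilde{n} \log(N/\tilde{n}))$. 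The total number of rounds is $O(\tilde{n} \log(N/\tilde{n})) = O(n \log(N/n))$ and the per-device energy is $O(\log(N/n))$, matching both lower bounds.

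The main obstacle is the joint existence of $\mathcal{F}$ and $\mathcal{G}$ with the coverage property while both families remain low-weight. Balancing the inclusion probabilities of $\mathcal{F}$ (sparse enough to force singleton rounds) against those of $\mathcal{G}$ (dense enough that listener coverage occurs in at least one isolating round) is the heart of the parameter analysis; an explicit derandomization via the method of conditional expectations then turns the existence statement into a deterministic schedule usable by the (non-uniform) protocol.
\end{proof-sketch}
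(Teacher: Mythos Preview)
The paper's proof is quite different and much shorter than what you propose: it derandomizes the randomized subroutine of \cref{lem:basic_alg2} directly by setting $f < 1/\sum_{\tilde n/2 < n \le \tilde n}\binom{N}{n}$ and taking a union bound over all admissible device sets to fix the randomness. Since $\log f^{-1}=\Theta(\tilde n\log(N/\tilde n))$, the time $O(\log f^{-1})$ and worst-case energy $O(1+\tilde n^{-1}\log f^{-1})$ of \cref{lem:basic_alg2} become the claimed bounds immediately. All the real work lives inside \cref{lem:basic_alg,lem:basic_alg2}, whose Phase~II already invokes the deterministic $\nocd$ algorithm of \cite{Chang2021detLE} as a black box on an $O(\tilde n)$-size ID space.

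Your direct selector construction is in the same probabilistic-method spirit, but as written it has two genuine gaps. First, in $\nocd$ the ACK sub-round is collision-free only when \emph{exactly one} listener rebroadcasts; your coverage property asks only for ``at least one'' listener in $G_{i^\star}$, so the claimed leader may hear nothing but collisions (i.e.\ silence) and never be confirmed. This is repairable by strengthening the requirement to $|G_{i^\star}\cap (S\setminus F_{i^\star})|=1$, and the same random construction still works. The second gap is more serious: many rounds $i$ can simultaneously be isolating and covered, producing several confirmed candidates, yet your propagation phase presumes a unique $x^\star$ and speaks of ``a second selector restricted to the non-leaders'' when no device yet knows who the leader is. Electing one leader among these candidates within $O(\log(N/n))$ energy is exactly the hard part; the paper's solution is to ensure $\Theta(\tilde n)$ devices receive unique IDs in an $O(\tilde n)$-size space and then call the existing algorithm of \cite{Chang2021detLE}, which costs only $O(1)$ energy in that regime. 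Your sketch supplies no comparable mechanism.
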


Our deterministic algorithm requires that  $n$ is known or a constant-factor approximation of $n$ is given.
See \cref{tab:deterministic} for a summary of results on deterministic algorithms in $\nocd$.


\begin{table*}[!ht]
\centering
\renewcommand{\arraystretch}{1.2}{\begin{tabular}{|l |l | l | l | l | l|}
\multicolumn{1}{l}{\bf Model} & 
\multicolumn{1}{l}{\bf Time} & 
\multicolumn{1}{l}{\bf Energy} &
\multicolumn{1}{l}{\bf Reference}
\\ \hline
$\nocd$ & $O(N)$ & $O(\log \frac{N}{n})$ & \cite{Chang2021detLE}
\\ \hline
$\nocd$ & $\Omega(n \log \frac{N}{n})$ & any & \cite{clementi2003distributed}
\\ \hline 
$\nocd$ & any & $\Omega(\log \frac{N}{n})$ & \cite{Chang2021detLE}
\\ \hline 
$\nocd$ & $O(n \log \frac{N}{n})$ & $O(\log \frac{N}{n})$ & \cref{thm:optimal_algo_no_cd}
\\ \hline
    \end{tabular}
    }
    \caption{Old and new results on deterministic leader election algorithms in $\nocd$, where $N$ indicates the size of the ID space and $n$ indicates the \emph{known} number of devices.}
    \label{tab:deterministic}
\end{table*}

\subsection{Technical Overview}


In this section we overview of the key ideas behind the proofs of our results.

\paragraph{The $\Omega(\log \log n)$ energy lower bound.}  
Recall from the discussion in \cref{sect-two-types-rand-alg} that the main source of difficulty of transforming a Monte Carlo algorithm
 a  into a Las Vegas one in the $\nocd$ model is that a device is unable to obtain any information from listening to the channel if no successful transmission occurs, as the feedback from the channel is always silence.  The behavior of such a device  depends only on its own private randomness, as it does not receive any other information. In particular, such a device cannot learn anything about $n$.
A key idea behind our $\Omega(\log \log n)$ energy lower bound in \cref{thm:energy_lb_finite_time} is to make use of this observation.
The proof of \cref{thm:energy_lb_finite_time} combines the following three ingredients. 
\begin{itemize}
    \item It was shown in~\cite{Newport14} that with constant probability there is no collision-free transmission in the first $t = O(\log n)$ rounds. This means that to prove the $\Omega(\log \log n)$ energy lower bound, it suffices to show that under the condition that the channel feedback is always silence, the energy cost in the first $t$ rounds is $\Omega(\log t) = \Omega(\log \log n)$ in expectation.
    \item The assumption that the given algorithm $\mathcal{A}$ has finite expected time complexity for some fixed $n=n^\ast$ implies that there are infinitely many $t$ such that the probability that the algorithm does not finish by time $t$ less than $f = 1/t$.
    \item By a derandomization, we may transform the $\Omega\left(\log \frac{N}{n}\right)$ deterministic energy lower bound in~\cite{Chang2021detLE}  into a randomized lower bound $\Omega\left(n^{-1} \log f^{-1}\right)$ for $f < 1 / \binom{N}{n}$.
\end{itemize}

Setting $f = 1/t$ and $n = n^\ast = \Theta(1)$ in the randomized lower bound $\Omega\left(n^{-1} \log f^{-1}\right) = \Omega(\log t)$, we obtain that the energy cost in the first $t$ rounds is $\Omega(\log t)$ when we run the algorithm $\mathcal{A}$ under the condition that the channel feedback is always silence.
Combining this with the lower bound of~\cite{Newport14}, we obtain the desired $\Omega(\log \log n)$-energy lower bound.
This lower bound argument still works even if the underlying network size $n$ is not $n^\ast$ because a device does not learn anything about $n$ if the channel feedback is always silence.



\paragraph{Las Vegas Leader election algorithms.} To prove \cref{thm:algo_sendercd}, we will first 
 design a basic subroutine that achieves the following. Given a network size estimate $\tilde{n}$, the subroutine elects a leader in $O(\log f^{-1})$ time with $O(\tilde{n}^{-1} \log f^{-1})$ energy in $\nocd$, if the number of devices $n$ satisfies $\tilde{n}/2 < n \leq \tilde{n}$.
 
 Using this subroutine, we may re-establish the result of~\cite{lavault2007quasi} that leader election can be solved in expected $O(\log n)$ time and using expected $O(\log \log n)$ energy.  The leader election  proceeds in iterations. During the $i$th iteration, we run our basic subroutine for all $\tilde{n} = 2^{1}, 2^2, 2^3,\ldots, 2^{2^i - 1}$ with $f = 1/4$, and then the algorithm terminates once a leader is elected. Intuitively, what the algorithm does in iteration $i$
 is that it goes over all network size estimates $\tilde{n}$ from $2^1$ to $2^{2^i - 1}$ and spend $O(1)$ time for each $\tilde{n}$. Once the number of devices $n$ satisfies $\tilde{n}/2 < n \leq \tilde{n}$, then a leader is elected with constant probability.
 
The main source of energy inefficiency of the above algorithm is the high energy cost of the basic subroutine with small $\tilde{n}$-values. 
In particular, if $\tilde{n} = O(1)$, Then the energy cost of 
 achieve a success probability of $1 - f$ is $O(\log f^{-1})$ using our basic subroutine.
 
A key idea behind the proof of \cref{thm:algo_sendercd} is an observation that this energy complexity $O(\log f^{-1})$ can be improved exponentially in the $\sendercd$ model  at the cost of increasing the time complexity. 
To see this, first consider the case when the number of devices is $n=2$. We allocate an ID space of size $N = \lceil f^{-1} \rceil$, let each device choose an ID uniformly at random from $[N]$, and run the $O(N)$-time and $O(\log \log N)$-energy deterministic $\sendercd$ leader election algorithm of~\cite{ChangKPWZ17}. As long as the two devices select different IDs, the algorithm of~\cite{ChangKPWZ17} successfully elects a leader.
Therefore, with probability at least $1 - f$, this procedure elects a leader in
$O(N) = O(f^{-1})$ time and uses $O(\log \log N) = O(\log \log f^{-1})$ energy. 

By switching to a more energy-efficient $\sendercd$ algorithm when dealing with small $\tilde{n}$-values, we are able to achieve an exponential improvement in the energy complexity.
It is crucial that the increase in the time complexity is not too much when we switch from our basic subroutine to some other $\sendercd$ algorithm. To ensure that our final algorithm has a finite expected time complexity, the probability that the algorithm does not terminate by time $t$ has to be $o(t^{-1})$ for all but a finite number of rounds $t$. To put it another way, for a given failure probability parameter $f$, the maximum allowed time complexity will be $o(f^{-1})$, meaning that the  simple $\sendercd$ algorithm for $n=2$ presented above is not suitable for our purpose.

To ensure that the time complexity is sufficiently small $o(f^{-1})$, we will employ a recent result~\cite{Chang2021detLE} on time-energy tradeoffs of deterministic leader election in the $\sendercd$ model.
By properly incorporating the $\sendercd$ algorithm of~\cite{Chang2021detLE} into our framework to take care of small $\tilde{n}$-values, we obtain a leader election algorithm $\mathcal{A}$ in the $\sendercd$ model that elects a leader in expected $O(\log^{1+\epsilon} n)$ time and using expected $O(\epsilon^{-1} \log \log \log n)$ energy, proving \cref{thm:algo_sendercd}.

 \paragraph{Derandomization.} The proof of \cref{thm:optimal_algo_no_cd} follows from a derandomization of our basic subroutine, which elects a leader in $O(\log f^{-1})$ time with $O(\tilde{n}^{-1} \log f^{-1})$ energy in $\nocd$. 
 In \cref{thm:optimal_algo_no_cd}, we consider the deterministic setting where the number of devices $n$ satisfies $\tilde{n}/2 < n \leq \tilde{n}$ and each device has a unique identifier in $[N]$.
 To derandomize our basic subroutine, we set \[f = 2^{-\Omega\left(\tilde{n} \log \frac{N}{\tilde{n}}\right)} <  \frac{1}{\sum_{n \in (\tilde{n}/2, \tilde{n}]} \binom{N}{n}}\] and apply a union bound over all size-$n$ subsets of $[N]$ for  $\tilde{n}/2 < n \leq \tilde{n}$. 
 Our choice of $f$ implies that the resulting deterministic algorithm has time complexity $O(\log f^{-1}) = O\left(\tilde{n} \log \frac{N}{{n}}\right)$ and energy complexity $O(\tilde{n}^{-1} \log f^{-1}) = O\left(\log \frac{N}{{n}}\right)$, as required.


\subsection{Organization}
In \cref{sect:finite-lower-bound}, we present an $\Omega(\log \log n)$ energy lower bound for Las Vegas leader election algorithms in  $\nocd$, proving \cref{thm:energy_lb_finite_time}.
In \cref{sect:basic}, we present our basic subroutine which we use in our randomized algorithms, and we derandomize it to obtain an optimal deterministic leader election algorithm in  $\nocd$, proving \cref{thm:optimal_algo_no_cd}.
In \cref{sect-rand-main}, we present a framework for designing leader election algorithms and use it to reprove the result of~\cite{lavault2007quasi} 
that in $\nocd$ a leader can be elected in expected $O(\log n)$ time and using expected $O(\log \log n)$ energy.
Finally, in \cref{sect:sender-main}, using our framework, we present a leader election algorithm in the $\sendercd$ model that finishes in expected $O(\log^{1+\epsilon} n)$ time and using expected $O(\epsilon^{-1} \log \log \log n)$ energy, proving \cref{thm:algo_sendercd}.

\section{A Tight Energy Lower Bound for Las Vegas Algorithms}\label{sect:finite-lower-bound}
In this section we prove \cref{thm:energy_lb_finite_time}, which gives an expected $\Omega(\log \log n)$ energy lower bound for any randomized algorithm $\mathcal{A}$ in the $\nocd$ model that works for an unknown number of devices $n$ and has a finite expected time complexity. 
In fact, we will prove the following stronger lower bound: With constant probability, the average energy cost among all $n$ devices is $\Omega(\log \log n)$ in an execution of $\mathcal{A}$ on $n$ devices.


\paragraph{Successful transmission.}
Instead of working with the leader election problem directly, we will consider the easier problem of having just one \emph{successful transmission}, which is defined as follows. We say that a successful transmission occurs in a round if there is exactly one device transmitting in this round and there is at least one device listening in this round. We allow some devices to be idle when a successful transmission occurs.
As we only consider the $\nocd$ model in this section, the feedback from the communication channel is always silence if there is no successful transmission, so every device does not receive external information. 

\paragraph{Description of an algorithm.}
Recall that in the deterministic setting, we assume each device $v$ has a unique identifier $\ID(v) \in [N]$, where the size of ID space $N$ is a global knowledge.
As we do not care about the behavior of the algorithm after the first successful transmission, we may assume that a deterministic algorithm 
$\mathcal{A}$ is specified by a mapping $\phi$ from the ID space $x \in [N]$ to an infinite sequence of actions $(a_{1}(x), a_2(x), \ldots)$, where each $a_i(x) \in \{\transmit, \listen, \idle\}$ specifies the action of a device $v$ with $\ID(v)=x$ in the $i$th round, assuming that the channel feedback is always silent whenever $v$ listens in the first $i-1$ rounds.
Similarly,  a randomized algorithm $\mathcal{A}$ is specified by a  distribution $\mathcal{D}$ of infinite sequences of actions $(a_{1}, a_2, \ldots)$. When a device $v$ runs a randomized algorithm $\mathcal{A}$, it uses its private random bits to sample $(a_{1}, a_2, \ldots) \sim \mathcal{D}$ to determine its action  $a_i \in \{\transmit, \listen, \idle\}$ in each round $i$ if the channel feedback is always silent whenever $v$ listens in the first $i-1$ rounds. Throughout this section, we use the above notation for describing a deterministic or a randomized algorithm $\mathcal{A}$.

\paragraph{Deterministic energy lower bound.}
Our proof of \cref{thm:energy_lb_finite_time} relies on transforming deterministic lower bounds into randomized lower bounds. 
We first prove an $\Omega(\log \frac{N}{n})$ energy lower bound for deterministic algorithms in $\nocd$. This lower bound is a slightly stronger version of \cite[Theorem~3]{Chang2021detLE} which considers the \emph{average} energy cost. 

\begin{lemma}
[{Generalization of \cite[Theorem~3]{Chang2021detLE}}]\label{lem:energy_lb_log_N_n}
Let $N$ be the size of the ID space, and let $2 \leq n \leq N/2$ be the number of devices. 
We allow  both $N$ and $n$ to be global knowledge.
Let $\mathcal{A}$ be any deterministic algorithm in the $\nocd$ model that guarantees a successful transmission in the first $t$ rounds for any choice of the size-$n$ subset $V \subseteq [N]$ of devices.
Let \[k_j = \left| \; \left\{ a_i(j) \ : \ \left(i\in[t]\right)  \wedge  \left( a_i(j) \neq \idle\right) \right\} \; \right|\] be the energy cost of a device with ID $j$ in the first $t$ rounds assuming that the channel feedback is always silent. Then we have
\begin{align*}
    \frac{1}{N} \sum_{j=1}^N k_j = \Omega\left(\log \frac{N}{n}\right).
\end{align*}
\end{lemma}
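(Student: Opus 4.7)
The plan is to reduce the claim to \cite[Theorem~3]{Chang2021detLE} by a Markov-style extraction of a low-energy subset of IDs. First, I would assume toward a contradiction that $\frac{1}{N}\sum_{j=1}^N k_j < c\log(N/n)$ for a sufficiently small constant $c > 0$. Since every $k_j \ge 0$, Markov's inequality produces the set
\[
S \defeq \{\,j \in [N] : k_j \le 2c\log(N/n)\,\}
\]
of size strictly more than $N/2$; in particular $|S| \ge n$, using the hypothesis $n \le N/2$.

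Next, I would consider the subalgorithm $\mathcal{A}'$ obtained from $\mathcal{A}$ by retaining only the IDs in $S$ (relabelled as $[|S|]$). The key observation is that in the $\nocd$ model under the silent-trajectory convention, no device ever receives external feedback, so each action sequence $(a_i(j))_{i \ge 1}$ depends only on the ID $j$. Consequently $\mathcal{A}'$ is a genuine deterministic algorithm on an ID space of size $|S|$, whose silent-trajectory energy count at each $j \in S$ is still exactly $k_j$. Moreover, because $\mathcal{A}$ guarantees a successful transmission within the first $t$ rounds for every size-$n$ subset of $[N]$, the same holds for every size-$n$ subset of $S$; hence $\mathcal{A}'$ meets the hypothesis of \cite[Theorem~3]{Chang2021detLE} on an ID space of size $|S|$. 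Invoking that theorem on $\mathcal{A}'$ then gives
\[
\sum_{j \in S} k_j \;=\; \Omega\bigl(|S|\log(|S|/n)\bigr) \;=\; \Omega\bigl(N\log(N/n)\bigr),
\]
where the last step uses $|S| > N/2$, so $\log(|S|/n) \ge \log(N/n) - 1$, which is $\Omega(\log(N/n))$ whenever the target bound is nontrivial. On the other hand $\sum_{j \in S} k_j \le \sum_{j=1}^N k_j < cN\log(N/n)$ by assumption, which contradicts the previous display once $c$ is taken smaller than the implicit constant in the $\Omega(\cdot)$.

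The main obstacle is the legitimacy of the restriction to $S$: I must argue that discarding the high-energy IDs outside $S$ really yields a bona fide algorithm on $|S|$ IDs to which the original theorem applies, with the same per-ID energy counts. This is precisely where the $\nocd$ assumption together with the always-silent feedback convention is decisive---device behavior is ID-intrinsic, so the restriction preserves every relevant quantity. Should \cite[Theorem~3]{Chang2021detLE} in fact be phrased as a max-energy (rather than average-energy) lower bound, the same Markov extraction yields an even more immediate contradiction via $\max_{j \in S} k_j \le 2c\log(N/n)$.
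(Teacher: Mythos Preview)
Your reduction is correct and takes a genuinely different route from the paper. The paper gives a self-contained argument: after sorting so that $k_1\le\cdots\le k_N$, it shows directly that the median $k_{N/2}=\Omega(\log(N/n))$ by picking a uniformly random pattern $(b_i)_{i=1}^t\in\{\transmit,\listen\}^t$ and observing that such a pattern is consistent (agrees on every non-idle round) with each of the $N/2$ lowest-energy action sequences with probability at least $2^{-k_{N/2}}$; hence some fixed pattern is consistent with a set $V$ of at least $N/2^{k_{N/2}+1}$ IDs, and since no subset of mutually consistent IDs can ever produce a successful transmission, $|V|<n$ forces $k_{N/2}=\Omega(\log(N/n))$. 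Your approach instead treats \cite[Theorem~3]{Chang2021detLE} as a black box and upgrades its max-energy bound to an average-energy bound via Markov.

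Two remarks. First, \cite[Theorem~3]{Chang2021detLE} is indeed the \emph{max}-energy statement---this is exactly why the present lemma is called a ``slightly stronger version''---so your primary line ``$\sum_{j\in S}k_j=\Omega(|S|\log(|S|/n))$'' is not what the cited theorem delivers; the argument that actually goes through is your fallback via $\max_{j\in S}k_j\le 2c\log(N/n)$, which you should make the main line. Second, the invocation on $\mathcal{A}'$ needs $n\le|S|-1$ (fine since $|S|>N/2\ge n$) and that the cited theorem applies to the successful-transmission task on the silent trajectory; both hold, but are worth stating explicitly. The paper's self-contained route sidesteps these bookkeeping points and, as a bonus, yields the sharper intermediate conclusion that already the \emph{median} of the $k_j$ is $\Omega(\log(N/n))$.
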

\begin{proof}
Re-order the $N$ values $\{k_1, k_2, \ldots, k_N\}$ such that $k_1 \leq k_2 \leq \cdots \leq k_N$. We will prove that $k_{N/2} = \Omega(\log \frac{N}{n})$, so the average value satisfies $\frac{1}{N} \sum_{j=1}^N k_j \geq \frac{1}{N} \cdot (\frac{N}{2} \cdot k_{N/2}) = \Omega(\log \frac{N}{n})$, as required.

We consider a random sequence $\{b_i\}_{i=1}^t$ where each $b_i$ is uniformly randomly sampled from $\{\listen, \transmit\}$. We say $\{b_i\}_{i=1}^t$ matches a sequence $\{a_i(j)\}_{i=1}^t$ if for any $i$, either $a_i(j) = \idle$, or $a_i(j) = b_i$. For each $j \in [\frac{N}{2}]$, since there are $k_j$ $\listen$ or $\transmit$ actions in the sequence $\{a_i(j)\}_{i=1}^t$, it is easy to see that
\[
\Pr[\{b_i\}_{i=1}^t \text{~matches~} \{a_i(j)\}_{i=1}^t] = \frac{1}{2^{k_j}} \geq \frac{1}{2^{k_{N/2}}}.
\]
Thus in expectation $\{b_i\}_{i=1}^t$ matches $N / 2^{k_{N/2}+1}$ number of action sequences in $[\frac{N}{2}]$. This means there must exist some $\{b_i\}_{i=1}^t$ that matches at least $N / 2^{k_{N/2}+1}$ number of action sequences in $[\frac{N}{2}]$. Let $V \subseteq [\frac{N}{2}]$ denote the set of devices that this $\{b_i\}_{i=1}^t$ matches with, and we have $|V| \geq N / 2^{k_{N/2}+1}$.

The algorithm $\mathcal{A}$ cannot be correct on any set $V' \subseteq V$, because in any round all the devices in $V'$ either all perform actions in $\{\listen, \idle\}$, or all perform actions in $\{\transmit, \idle\}$, so there does not exist a round where exactly one device transmits and at least one device listens. Since the algorithm $\mathcal{A}$ is correct on all sets of size $n$, we must have $n > |V| \geq N / 2^{k_{N/2}+1}$, which gives $k_{N/2} = \Omega(\log \frac{N}{n})$.
\end{proof}

\paragraph{Randomized energy lower bound.}
Next, we prove the following energy lower bound for algorithms with success probability $f$ using a reduction from the previous $\Omega(\log \frac{N}{n})$ energy lower bound for deterministic algorithms in $\nocd$.

Note that if $f \geq (8e)^{-n}$, then $\frac{1}{n} \log \frac{1}{f} = \log(8e) \leq 5$, so $\Omega(n^{-1} \log f^{-1})$ becomes $\Omega(1)$, which is a trivial lower bound for leader election as well as many non-trivial tasks. In this sense the assumption $f < (8e)^{-n}$ in the following lemma is justified.


\begin{lemma}[Energy lower bound for algorithms with error probability $f$]\label{lem:energy_lb_f_prob}
Let $0 < f < (8e)^{-n}$. Let $\mathcal{A}$ be a randomized algorithm in the $\nocd$ model that satisfies the following.  When $\mathcal{A}$ is executed on a network of $n$ devices,  with probability at least $1 - f$, a successful transmission occurs in the first $t$ rounds. We allow both parameters $n$ and $f$ to be global knowledge.

Then there exists an integer $N = \Theta\left(n \cdot (1/f)^{1/n}\right)$ such that the following is satisfied. Define $k_1, k_2, \ldots, k_N$ as independent random variables
\[k_j = \left| \; \left\{ a_i(j) \ : \ \left(i\in[t]\right)  \wedge  \left( a_i(j) \neq \idle\right) \right\} \; \right|,  \text{~where~} (a_1(j), a_2(j), \ldots) \sim \mathcal{D},\] representing the energy cost of $\mathcal{A}$ for a device $v$ with $\ID(v) = j$ in the first $t$ rounds assuming that the channel feedback is always silence. Then we have \[\Pr\left[\frac{1}{N} \sum_{j=1}^N k_j = \Omega\left(\frac{1}{n} \log \frac{1}{f}\right)\right] \geq \frac{3}{4}.\]
As a result, the expected value of each $k_j$ is $\Omega(n^{-1} \log f^{-1})$.
\end{lemma}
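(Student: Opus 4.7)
The plan is to \emph{derandomize} the given algorithm $\mathcal{A}$ on an enlarged ID space and then invoke the deterministic energy lower bound of \cref{lem:energy_lb_log_N_n}. The intuition is that if the algorithm succeeds with probability $1-f$ on any $n$ devices, then by a union bound, a single sample of $\mathcal{A}$'s random tape (one independent copy per candidate ID) yields a deterministic algorithm that simultaneously works for all $\binom{N}{n}$ possible identifier sets, provided $\binom{N}{n} f$ is small. Once that deterministic algorithm exists, \cref{lem:energy_lb_log_N_n} pins down the average energy.

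Concretely, I would first pick $N = \Theta(n \cdot f^{-1/n})$ with the hidden constant chosen small enough that
\[
\binom{N}{n} f \;\leq\; \Bigl(\tfrac{eN}{n}\Bigr)^{\!n} f \;\leq\; \tfrac{1}{4}.
\]
The hypothesis $f < (8e)^{-n}$ gives $f^{-1/n} > 8e$, which makes it easy to simultaneously ensure $N \geq 2n$, the range required by \cref{lem:energy_lb_log_N_n}. I would then view the $N$ i.i.d.\ samples $(a_1(j), a_2(j), \ldots) \sim \mathcal{D}$, indexed by $j \in [N]$, as defining a random \emph{deterministic} algorithm $\mathcal{A}'$ on the ID space $[N]$, where a device with identifier $j$ simply plays its sampled action sequence.

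For every fixed size-$n$ subset $V \subseteq [N]$, executing $\mathcal{A}'$ on $V$ is distributionally identical to executing $\mathcal{A}$ on $n$ devices, so a successful transmission is missing in the first $t$ rounds with probability at most $f$. A union bound over all $\binom{N}{n}$ subsets shows that with probability at least $1 - \binom{N}{n} f \geq 3/4$, the realized $\mathcal{A}'$ produces a successful transmission within $t$ rounds on \emph{every} size-$n$ subset of $[N]$. On this event, $\mathcal{A}'$ satisfies the hypothesis of \cref{lem:energy_lb_log_N_n}, so
\[
\frac{1}{N}\sum_{j=1}^{N} k_j \;=\; \Omega\!\left(\log \tfrac{N}{n}\right) \;=\; \Omega\!\left(\tfrac{1}{n}\log \tfrac{1}{f}\right),
\]
where the last equality uses $N/n = \Theta(f^{-1/n})$. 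This yields the high-probability claim.

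For the ``as a result'' part, I would observe that the variables $k_1, \ldots, k_N$ are i.i.d.\ (each depends only on an independent sample from $\mathcal{D}$), so $\E[k_j]$ is the same for every $j$ and equals $\E\!\left[\tfrac{1}{N}\sum_{j} k_j\right]$, which is $\Omega(n^{-1}\log f^{-1})$ by the high-probability bound (the contribution from the $\leq 1/4$ failure event is nonnegative and can simply be dropped). The only step that requires care is the calibration of the constant in $N$: one needs $\binom{N}{n} f \leq 1/4$ \emph{and} $N \geq 2n$ \emph{and} $N = \Theta(n f^{-1/n})$ to hold simultaneously. I expect this to be the main (and essentially only) obstacle, and the assumption $f < (8e)^{-n}$ is exactly what provides the slack needed to reconcile these three constraints.
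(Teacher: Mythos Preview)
Your proposal is correct and follows essentially the same argument as the paper: the paper also fixes $N$ so that $\binom{N}{n} f \leq 1/4$ (specifically, the largest $N$ with $\binom{N}{n} < 1/(4f)$, then verifies $N \geq 2n$ and $N = \Theta(n f^{-1/n})$), samples $N$ independent action sequences, applies a union bound to get the ``all subsets succeed'' event with probability $\geq 3/4$, and invokes \cref{lem:energy_lb_log_N_n} on that event. The only cosmetic difference is that the paper defines $N$ implicitly via the binomial inequality and then derives $N = \Theta(n f^{-1/n})$, whereas you fix $N$ explicitly and verify the binomial inequality; both calibrations go through under $f < (8e)^{-n}$.
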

\begin{proof}
Let $N$ be the largest integer that satisfies $\binom{N}{n} < \frac{1}{4f}$. Since $f < (8e)^{-n}$ and $\binom{2n}{n} \leq (2e)^n$, we must have $N \geq 2n$. By definition we have 
\[
\frac{1}{4f} \leq \binom{N+1}{n} = \binom{N}{n} \cdot \frac{N+1}{N-n+1} \leq 2 \cdot \binom{N}{n}.
\]
Thus we have $\frac{1}{f} = \Theta\left(\binom{N}{n}\right)$, and hence $\log \frac{1}{f} = \Theta\left(n \cdot \log \frac{N}{n}\right)$ and $N = \Theta\left(n \cdot (1/f)^{1/n}\right)$.

Consider the ID space $[N]$. For each $j \in [N]$, we fix the private random bits for the device $v$ with $\ID(v)$ independently, and let 
\[(a_{1}(j), a_2(j), \ldots) \sim \mathcal{D}\]
be the actions of $v$ for each round $i$ when running $\mathcal{A}$ using its private random bits, assuming that the channel feedback is always silent whenever $v$ listens in the first $i-1$ rounds. We also define
\[k_j = \left| \; \left\{ a_i(j) \ : \ \left(i\in[t]\right)  \wedge  \left( a_i(j) \neq \idle\right) \right\} \; \right|.\] 


For a fixed subset $S$ of $n$ devices of $[N]$, the algorithm $\mathcal{A}$ executed on $S$ guarantees a successful transmission within the first $t$ rounds with probability at least $1 - f$. We define  $X$ to be the event that for all subsets $S$ of $n$ devices of $[N]$, $\mathcal{A}$ successfully elects a leader. Note that $X$ depends on the private random bits used in sampling $(a_{1}(j), a_2(j), \ldots) \sim \mathcal{D}$ for all $j \in [N]$. Using a union bound, and since $\binom{N}{n} < \frac{1}{4f}$,
\[
\Pr[X] \geq 1 - f \cdot \binom{N}{n} \geq \frac{3}{4}.
\]

If for some fixed random bits the event $X$ happens, then the algorithm $\mathcal{A}$ with those fixed random bits yields a deterministic algorithm for ID space $[N]$. The $\Omega(\log \frac{N}{n})$ energy lower bound of \cref{lem:energy_lb_log_N_n} then implies that the lower  bound $\frac{1}{N} \sum_{j=1}^N k_j = \Omega\left(\log \frac{N}{n}\right)$ holds whenever $X$ happens. Thus we have
\[
\Pr\left[\frac{1}{N} \sum_{j=1}^N k_j = \Omega\left(\frac{1}{n} \log \frac{1}{f}\right)\right] = \Pr\left[\frac{1}{N} \sum_{j=1}^N k_j = \Omega\left(\log \frac{N}{n}\right)\right] \geq \frac{3}{4}.\qedhere
\]
\end{proof}

In the following, we apply \cref{lem:energy_lb_f_prob} to analyze the energy complexity of any algorithm  $\mathcal{A}$ meeting the conditions specified in \cref{thm:energy_lb_finite_time}.

\begin{lemma}[$\Omega(\log t)$ energy in $t$ rounds]\label{cor:log_t_energy_t_rounds}
Let $\mathcal{A}$ be a randomized leader election algorithm in the $\nocd$ model that works for any unknown number of devices $n \geq 2$, and it has a finite expected time complexity for some $n = n^\ast$.

Then there exists an infinite set $S$ of positive integers such that for any $t \in S$, there exists an integer $N = \Theta\left(t^{1/n^\ast}\right)$ such that the following is satisfied. Define $k_1, k_2, \ldots, k_N$ as independent random variables
\[k_j = \left| \; \left\{ a_i(j) \ : \ \left(i\in[t]\right)  \wedge  \left( a_i(j) \neq \idle\right) \right\} \; \right|,  \text{~where~} (a_1(j), a_2(j), \ldots) \sim \mathcal{D},\] representing the energy cost of $\mathcal{A}$ for a device $v$ with $\ID(v) = j$ in the first $t$ rounds assuming that the channel feedback is always silent. Then we have \[\Pr\left[\frac{1}{N} \sum_{j=1}^N k_j = \Omega\left(\log t\right)\right] \geq \frac{3}{4}.\]
In particular, the expected value of each $k_j$ is $\Omega(\log t)$.
\end{lemma}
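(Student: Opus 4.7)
The plan is to reduce the claim to \cref{lem:energy_lb_f_prob} by producing, from the single assumption of finite expected running time at $n=n^\ast$, an infinite sequence of time horizons $t$ at which the algorithm succeeds with probability at least $1-1/t$.

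First, I would translate ``finite expected time complexity at $n^\ast$'' into a statement about the survival function of the running time $T$ of $\mathcal{A}$ on $n^\ast$ devices. Since $\E[T] = \sum_{t \geq 1} \Pr[T \geq t]$ is finite, the series on the right converges. If it were the case that $\Pr[T \geq t] \geq 1/t$ for all but finitely many $t$, this series would dominate a tail of the harmonic series and hence diverge, a contradiction. Therefore there must exist an infinite set $S$ of positive integers such that for every $t \in S$, $\Pr[T \geq t] < 1/t$, which is equivalent to saying that with probability at least $1 - 1/t$ the algorithm $\mathcal{A}$, run on $n^\ast$ devices, halts within the first $t$ rounds. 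Since the algorithm is only allowed to terminate once a leader has identified itself via a collision-free transmission, halting by round $t$ implies that a successful transmission has occurred within the first $t$ rounds; hence on $n^\ast$ devices the probability of a successful transmission within $t$ rounds is at least $1 - 1/t$ for every $t \in S$.

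Next, for each sufficiently large $t \in S$, I would invoke \cref{lem:energy_lb_f_prob} with parameters $n := n^\ast$ and $f := 1/t$. The hypothesis $f < (8e)^{-n}$ of that lemma is $t > (8e)^{n^\ast}$, which, since $n^\ast$ is a fixed constant, is satisfied by all but finitely many $t$; removing these finitely many exceptions from $S$ leaves an infinite set that I can still call $S$. Applying the lemma then yields an integer
\[
N \;=\; \Theta\!\left(n^\ast \cdot (1/f)^{1/n^\ast}\right) \;=\; \Theta\!\left(t^{1/n^\ast}\right),
\]
using that $n^\ast$ is constant, together with the independent random variables $k_1,\dots,k_N$ exactly as defined in the statement, and the lower bound
\[
\Pr\!\left[\frac{1}{N}\sum_{j=1}^N k_j = \Omega\!\left(\frac{1}{n^\ast}\log\frac{1}{f}\right)\right] \;\geq\; \frac{3}{4}.
\]
Since $\frac{1}{n^\ast}\log(1/f) = \frac{\log t}{n^\ast} = \Omega(\log t)$ (again treating $n^\ast$ as a constant), this is precisely the desired conclusion. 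The statement ``in particular, the expected value of each $k_j$ is $\Omega(\log t)$'' then follows because the $k_j$ are i.i.d., so $\E[k_j] = \E\!\left[\tfrac{1}{N}\sum_j k_j\right]$, and the high-probability lower bound on the average immediately implies a lower bound of the same order on its expectation.

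The only delicate point, and the one I would write out with some care, is the reduction from ``finite $\E[T]$'' to ``infinitely many $t$ with $\Pr[T \geq t] < 1/t$''; the remainder is a straightforward plug-in to \cref{lem:energy_lb_f_prob} with the observation that running $\mathcal{A}$ at the fixed value $n^\ast$ fits that lemma's template, and the constants $n^\ast$ and $(8e)^{n^\ast}$ can be absorbed into the asymptotic notation because they do not depend on $t$.
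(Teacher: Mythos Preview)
Your proposal is correct and follows essentially the same route as the paper: derive an infinite set $S$ of time horizons with $\Pr[T\ge t]<1/t$ from the finiteness of $\E[T]=\sum_{t\ge 1}\Pr[T\ge t]$ via comparison with the harmonic series, then plug $n=n^\ast$, $f=1/t$ into \cref{lem:energy_lb_f_prob}, discard the finitely many $t\le (8e)^{n^\ast}$, and absorb the constant $n^\ast$ into the asymptotic notation. Your justification of the ``in particular'' clause via the i.i.d.\ nature of the $k_j$ is in fact slightly more explicit than the paper's.
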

\begin{proof}
The expected time of $\mathcal{A}$ when running on $n^\ast$  devices is
\[
\sum_{t=1}^{\infty} \Pr[\mathcal{A} \text{~does not finish  within $t$ rounds when running on $n^\ast$ devices}].
\]
Since we assume this expected time is finite, and $\sum_{t=1}^{\infty} \frac{1}{t} = \infty$, there must exist an infinite set $S$ of positive integers such that for any $t \in S$, 
\[
\Pr[\mathcal{A} \text{~does not finish  within $t$ rounds when running on $n^\ast$ devices}] \leq \frac{1}{t}.
\]

Fix any $t \in S$.
If we run $\mathcal{A}$ on $n^\ast$ devices, 
then  $\mathcal{A}$ successfully elects a leader within the first $t$ rounds with probability at least $1 - \frac{1}{t}$. In particular, it guarantees a successful transmission within the first $t$ rounds with probability at least $1 - \frac{1}{t}$.

Therefore, applying  \cref{lem:energy_lb_f_prob} with $f = 1/t$ and $n = n^\ast$, and since $n^*$ is a fixed constant, there exists an integer $N = \Theta\left(n^\ast \cdot t^{1/n^\ast}\right) = \Theta\left(t^{1/n^\ast}\right)$ such that the conclusion of this lemma is satisfied. Although the criterion $0 < f < (8e)^{-n}$ needed for applying  \cref{lem:energy_lb_f_prob} might not be met when $t$ is small, the set $S$ excluding those small $t$ is still infinite.
\end{proof}

The last missing piece is the following lemma that shows an $\Omega(\log n)$ time lower bound when the algorithm succeeds with constant probability. The proof of this lemma   follows from the same proof argument of~\cite[Theorem 5.2]{Newport14}, which shows an $\Omega(\log n)$ \emph{expected} time lower bound, so here we only include a proof sketch. For the sake of completeness, we include a full proof in  \cref{app:missing-proof}. We note that the purpose of the  technical condition $\sqrt{\tilde{n}} \leq n \leq \tilde{n}$ in the lemma is to ensure that the $\Omega(\log \log n)$ energy lower bound in \cref{thm:energy_lb_finite_time} applies to infinitely many $n$.

\begin{lemma}[$\Omega(\log n)$ time for constant success probability \cite{Newport14}]\label{lem:logn_time_lb}
Let $\mathcal{A}$ be a randomized algorithm in the $\sendercd$ model such that for any known integer $\tilde{n} \geq 2$, for any unknown number of devices $\sqrt{\tilde{n}} \leq n \leq \tilde{n}$, the probability that a collision-free transmission occurs by time $T(\tilde{n})$ in an execution of $\mathcal{A}$ on $n$ devices is at least $1/4$. Then we have $T(\tilde{n}) = \Omega(\log \tilde{n})$.
\end{lemma}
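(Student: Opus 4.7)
The plan is to follow the classical argument of~\cite{Newport14}, adapted to give a lower bound on the time $T = T(\tilde n)$ needed to achieve constant success probability (rather than an expected-time bound). The starting observation is that in the $\sendercd$ model, as long as no round has had exactly one transmitter, every device receives the feedback ``silence'' regardless of whether it is transmitting, listening, or idle, so no device has yet learned anything beyond its own random bits. Consequently, on the event ``no collision-free transmission in rounds $1,\ldots,i-1$'', each device transmits in round $i$ with the same probability $p_i$, independently across devices. This lets us couple $\mathcal{A}$ with the oblivious protocol that always forces the channel to be silent and in which each device transmits in round $i$ with probability $p_i$; the two protocols agree on the event that no collision-free transmission has yet occurred, and in particular assign the same probability to that event.

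Under this reduction, for any $n \in [\sqrt{\tilde n}, \tilde n]$ I would apply a union bound over rounds to get
\[
\Pr[\text{some collision-free transmission in the first }T\text{ rounds}\mid n] \;\le\; \sum_{i=1}^T n p_i (1-p_i)^{n-1}.
\]
The hypothesis forces the left-hand side to be at least $1/4$ for every $n \in [\sqrt{\tilde n},\tilde n]$, so the same lower bound holds for the sum on the right.

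Next I would cover the interval $[\sqrt{\tilde n},\tilde n]$ by $K = \Theta(\log\tilde n)$ geometric scales, e.g.\ $n_k = 2^k \sqrt{\tilde n}$ for $k = 0,1,\ldots,K$, and for each scale define
\[
B_k \;=\; \bigl\{\, i \in [T] \;:\; p_i \in [c_1/n_k,\, c_2/n_k]\,\bigr\}
\]
for suitable constants $c_1 < 1 < c_2$. The key quantitative fact is that the function $p \mapsto n_k p (1-p)^{n_k-1}$ is unimodal with peak value $\Theta(1)$ at $p = 1/n_k$, decays linearly in $p$ for $p \ll 1/n_k$, and decays exponentially in $p n_k$ for $p \gg 1/n_k$. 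A short calculation then shows that the rounds with $p_i$ outside $[c_1/n_k,c_2/n_k]$ contribute at most $1/8$ in total to the sum $\sum_i n_k p_i(1-p_i)^{n_k-1}$, so the lower bound $1/4$ at scale $n_k$ forces
\[
\sum_{i\in B_k} n_k p_i (1-p_i)^{n_k-1} \;\ge\; \tfrac{1}{8}.
\]
Since each summand is $O(1)$, this yields $|B_k| = \Omega(1)$ for every scale $k$.

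Finally I would observe that each round $i \in [T]$ lies in at most $O(1)$ of the sets $B_k$, because the intervals $[c_1/n_k, c_2/n_k]$ are geometrically shifted with constant ratio while $n_k$ doubles between consecutive scales. Summing $|B_k| = \Omega(1)$ across the $\Theta(\log\tilde n)$ scales and dividing by the constant overlap multiplicity gives $T = \Omega(\log\tilde n)$, as claimed. The step I expect to be the main obstacle is the off-mode tail estimate bounding the contribution of rounds with $p_i \notin [c_1/n_k, c_2/n_k]$ by $1/8$; this is where one combines the linear decay of $np$ for $p \ll 1/n$ with the exponential decay of $(1-p)^{n-1}$ for $p \gg 1/n$, and where the constraint $n \in [\sqrt{\tilde n}, \tilde n]$ is used to guarantee that all $\Theta(\log\tilde n)$ geometric scales correspond to valid values of $n$.
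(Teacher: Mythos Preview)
Your reduction to an oblivious sequence $(p_1,\ldots,p_T)$ is correct, and it is a different route from the paper's: the paper goes through Yao's minimax principle and the hitting-set construction of Alon et al., building a hard distribution $\mathcal{H}$ over subsets $H\subseteq[\tilde n]$ with $\sqrt{\tilde n}\le|H|\le\tilde n$ such that any single round hits only an $O(1/\log\tilde n)$ fraction of $\mathcal{H}$, and then arguing that hitting a $1/4$ fraction requires $\Omega(\log\tilde n)$ rounds.

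The step you flag as the main obstacle is, however, genuinely wrong as stated. The claim that rounds with $p_i\notin[c_1/n_k,c_2/n_k]$ contribute at most $1/8$ in total to $\sum_i n_k p_i(1-p_i)^{n_k-1}$ cannot hold for any fixed constants $c_1,c_2$: a round with $p_i=c/n_k$ for some $c$ just outside $[c_1,c_2]$ still contributes $\Theta(c\,e^{-c})=\Theta(1)$, and nothing in the hypothesis bounds how many such rounds there are. Concretely, one can satisfy the constraint $\sum_i n p_i(1-p_i)^{n-1}\ge 1/4$ at every scale while leaving some $B_k$ empty, by placing many rounds at probabilities spaced far apart on the geometric ladder and letting the linear tail from adjacent clusters carry the weight at intermediate scales; so the conclusion $|B_k|=\Omega(1)$ need not hold. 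The clean fix is to drop the $B_k$ decomposition entirely: sum the inequality $\sum_{i=1}^T n_k p_i(1-p_i)^{n_k-1}\ge 1/4$ over the $K=\Theta(\log\tilde n)$ scales, swap the order of summation, and use the elementary bound $\sum_k n_k p_i(1-p_i)^{n_k-1}=O(1)$ for each fixed $i$ (the summands are of the form $x_k e^{-x_k}$ with $x_k=n_k p_i$ doubling in $k$). This yields $T\cdot O(1)\ge K/4$ directly, which is exactly your intended conclusion without the false intermediate step.
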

\begin{proof}[Proof sketch] 
Following the terminology of \cite{Newport14}, for any two sets $F,H \subseteq [N]$, we say that $F$ hits $H$ if $|F \cap H| = 1$. In Theorem 3.1 of \cite{alon2014broadcast}, it was shown that given any integer $N \geq 2$, there exists a multiset $\mathcal{H}$ of subsets $H \subseteq [N]$ such that every subset $F \subseteq [N]$ hits at most $O\left(\frac{1}{\log N}\right)$ fraction of $\mathcal{H}$. In fact this result also holds with the additional requirement that each subset $H$ in the multiset $\mathcal{H}$ has size $\sqrt{N} \leq |H| \leq N$. This follows from a straightforward extension of the original proof of \cite{alon2014broadcast}, and for simplicity we omit it.

We follow the same proof strategy as Theorem 5.2 of \cite{Newport14} which proves an $\Omega(\log n)$ expected time lower bound using Theorem 3.1 of \cite{alon2014broadcast}. Construct the multiset $\mathcal{H}$ with $N = \tilde{n}$. For each set $H \subseteq [\tilde{n}]$ in the multiset $\mathcal{H}$, consider running the algorithm $\mathcal{A}$ on the devices in $H$. In each round, a collision-free transmission occurs if and only if the set of the transmitting devices $F \subseteq [\tilde{n}]$ hits the set $H$.  Since any set $F$ hits at most $O\left(\frac{1}{\log \tilde{n}}\right)$ fraction of $\mathcal{H}$, in order to achieve a success probability of $1/4$, $\mathcal{A}$ needs to run for at least $\Omega(\log \tilde{n})$ rounds.
\end{proof}

Now we are ready to prove \cref{thm:energy_lb_finite_time} by combining the $\Omega(\log n)$ time lower bound in \cref{lem:logn_time_lb} with the $\Omega(\log t)$ energy lower bound in \cref{cor:log_t_energy_t_rounds}. 


\thmlbfinite*

\begin{proof}
From \cref{lem:logn_time_lb} we have that there exists a constant $c > 0$ such that for each integer $\tilde{n} \geq 2$, there exists an integer $\sqrt{\tilde{n}} \leq n \leq \tilde{n}$  such that when running $\mathcal{A}$ on $n$ devices, we have
\begin{equation}\label{eq:energy_lb_finite_time_1}
    \Pr\left[\text{~no successful transmission within the first $c \log n$ rounds~}\right] \geq 3/4.
\end{equation}


Let $S$ be the infinite set obtained from applying \cref{cor:log_t_energy_t_rounds} with the algorithm $\mathcal{A}$. Consider any $t \in S$. 
Choose $\tilde{n} = 2^{2t/c}$, and then pick an integer $\sqrt{\tilde{n}} \leq n \leq \tilde{n}$  such that when running $\mathcal{A}$ on $n$ devices, Eq.~\eqref{eq:energy_lb_finite_time_1} holds. The existence of such an integer $\tilde{n}$ is guaranteed by \cref{lem:logn_time_lb}. 

Our choice of $n$ ensures that $t \leq c \log n$, so Eq.~\eqref{eq:energy_lb_finite_time_1} implies that 
when running $\mathcal{A}$ on $n$ devices, with probability at least $3/4$, no successful transmission occurs within the first $t = \Theta(\log n)$ rounds. 

Consider the integer integer $N = \Theta\left(t^{1/n^\ast}\right)$ in \cref{cor:log_t_energy_t_rounds}.
Let $k$ denote the average energy used by the first $N$ devices in $t$ rounds. \cref{cor:log_t_energy_t_rounds} implies that 
\begin{equation}\label{eq:energy_lb_finite_time_2}
    \Pr[k \geq \Omega(\log t)] \geq 3/4.
\end{equation}

Combining Eq.~\eqref{eq:energy_lb_finite_time_1} and \eqref{eq:energy_lb_finite_time_2} with a union bound, we obtain that with probability at least $1/2$, when running $\mathcal{A}$ on $n$ devices, there is at least one device that uses at least $\Omega(\log t) = \Omega(\log \log n)$ energy. Since there are infinitely many $t \in S$, we are able to select infinitely many $n$ such that the $\Omega(\log \log n)$ lower bound holds.

For the rest of the proof, we extend the above argument to show that in fact  with probability at least $1/2$, when running $\mathcal{A}$ on $n$ devices, the average energy cost per device is $\Omega(\log t) = \Omega(\log \log n)$, so the expected energy cost per device is indeed $\Omega(\log \log n)$.

Let $v_1, v_2, \ldots, v_n$ denote the $n$ devices. For each $j \in [n]$, let  $k_j$ be the random variable representing the energy cost of $\mathcal{A}$ for $v_j$ in the first $t$ rounds assuming that the channel feedback is always silent. It is clear that $k_1, k_2, \ldots, k_n$ are independent, and each $k_j$ depends only on the private random bits in $v_j$. It suffices to show that 
\[\Pr\left[\frac{1}{n} \sum_{j=1}^n k_j = \Omega\left(\log t\right)\right] \geq \frac{3}{4}.\]

To prove this bound, we partition $\{k_1, k_2, \ldots, k_n\}$ into $\lfloor{n/N}\rfloor$ disjoint groups of size exactly $N$ and at most one leftover group. For each group, with probability at least $3/4$, the average value is $\Omega(\log t)$. Applying a Chernoff bound over all $\lfloor{n/N}\rfloor$ disjoint groups of size $N$, we obtain that the average value of $\{k_1, k_2, \ldots, k_n\}$ is $\Omega(\log t)$ with probability $1 - e^{-\Omega(n/N)} = 1 - e^{-\Omega(n / \log n)} \gg 3/4$, as long as $n$ is sufficiently large, as $N = O\left(t^{1/n^\ast}\right) = o(\log n)$.
\end{proof}

\section{Leader Election with a Network Size Estimate}\label{sect:basic}

For upper bounds, we begin with the case where an estimate $\tilde{n}$ of the actual number of devices $n$ is given. Given a parameter $0 < f < 1$, our task is to elect a leader with probability $1 - f$ when the estimate $\tilde{n}$ satisfies $\tilde{n}/2 < n \leq \tilde{n}$. 

In this section, we design an algorithm solving this task with $O(\log f^{-1})$ time and $O(\tilde{n}^{-1} \log f^{-1})$ energy in the $\nocd$ model. In \cref{sect:multi} we extend this algorithm to deal with multiple pairs of $(\tilde{n}, f)$, and the resulting algorithm will later be used as a subroutine in our leader election algorithms for the more challenging scenario where the number of devices $n$ is completely unknown. In \cref{sect:derandomize}, we derandomize our algorithm to give an optimal deterministic leader election algorithm. 

\paragraph{Balls-into-bins.} We need the following balls-into-bins lemma. In this lemma we care about a subset of the bins, which we call ``good'' bins, and the rest of the bins are called ``bad'' bins. The goal of this lemma is to analyze the number of good bins that contain exactly one ball.

In \cref{lem:good_bins_one_ball}, the numbers $\alpha N$ and $\gamma N$ are not required to be integers, but $n = \beta N$ must be an integer, as it specifies the number of balls in a balls-into-bins experiment.

\begin{lemma}[Number of good bins with exactly one ball]\label{lem:good_bins_one_ball}
Let $N$ be an integer. Let $\alpha, \beta, \gamma \in (0, 1)$ be three numbers satisfying $\alpha > 2 \beta + \gamma$. Let $t$ be any parameter such that $t < \alpha - 2 \beta - \gamma$. There are $n = \beta N$ balls and $N$ bins. Among the bins there are at least $\alpha N$ good bins, and the rest of the bins are called bad bins.

Consider the following balls-into-bins experiment.
For each ball, there is an arbitrary subset of at least $(1-\gamma)N$ bins such that the ball is uniformly randomly thrown into one of the bins in this subset. 
With probability at least $1 - e^{-\frac{t^2 n}{2}}$, there exist at least $(\alpha - 2 \beta - \gamma - t )n$ good bins that contain exactly one ball.
\end{lemma}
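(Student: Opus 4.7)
The plan is the textbook ``first-moment plus bounded-differences concentration'' strategy. I will let $Y$ denote the number of good bins containing exactly one ball, first establish a lower bound of the form $\mathbb{E}[Y] \geq (\alpha - 2\beta - \gamma)\,n$, and then apply McDiarmid's inequality to show $Y$ cannot drop much below its mean.

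For the expectation, I would write $\mathbb{E}[Y] = \sum_{b \text{ good}} \Pr[n_b = 1]$ where $n_b$ denotes the number of balls landing in bin $b$. Fixing a good bin $b$ and using independence of the ball placements together with the Weierstrass inequality $\prod_j (1 - x_j) \geq 1 - \sum_j x_j$, I obtain
\[
\Pr[n_b = 1] \;=\; \sum_{i=1}^n \Pr[\text{ball } i \in b] \prod_{j \neq i}\bigl(1 - \Pr[\text{ball } j \in b]\bigr) \;\geq\; \Bigl(1 - \tfrac{\beta}{1-\gamma}\Bigr)\sum_{i=1}^n \Pr[\text{ball } i \in b],
\]
where I use $\Pr[\text{ball } j \in b] \leq 1/|S_j| \leq 1/((1-\gamma)N)$ and thus $\sum_{j\neq i}\Pr[\text{ball } j \in b] \leq n/((1-\gamma)N) = \beta/(1-\gamma)$. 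Summing over good bins, and noting that the allowed subset $S_i$ of any ball $i$ contains at least $|S_i| - (1-\alpha)N \geq (\alpha - \gamma)N$ good bins so that $\Pr[\text{ball } i \in \text{good bin}] \geq \alpha - \gamma$, I get $\mathbb{E}[Y] \geq \bigl(1 - \tfrac{\beta}{1-\gamma}\bigr)(\alpha - \gamma)\,n$. Since $\alpha \leq 1$ gives $(\alpha - \gamma)/(1 - \gamma) \leq 1$, this simplifies to $\mathbb{E}[Y] \geq (\alpha - \beta - \gamma)\,n \geq (\alpha - 2\beta - \gamma)\,n$.

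For the concentration step, I would view $Y$ as a function of the $n$ independent ball positions $B_1, \ldots, B_n$. Moving a single ball from one bin to another can affect the ``exactly-one-ball'' status of at most two bins (its old and new bins), so $Y$ satisfies the bounded-differences property with constant $c_i = 2$ for every $i$. McDiarmid's inequality therefore yields
\[
\Pr\!\left[\,Y \leq \mathbb{E}[Y] - t n\,\right] \;\leq\; \exp\!\left(-\frac{2(tn)^2}{\sum_{i=1}^n c_i^2}\right) \;=\; \exp\!\left(-\frac{t^2 n}{2}\right).
\]
Combined with the expectation bound, this gives $Y \geq (\alpha - 2\beta - \gamma - t)\,n$ with probability at least $1 - e^{-t^2 n/2}$, as required. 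I do not foresee a substantive obstacle; the only delicate point is the first-moment computation, where one must invoke $\alpha \leq 1$ to absorb the stray $(\alpha - \gamma)\beta/(1-\gamma)$ term cleanly into a $\beta$ slack, and one must check that $\beta + \gamma < 1$ (which follows from the hypothesis $\alpha > 2\beta + \gamma$ together with $\alpha \leq 1$) so that the factor $1 - \beta/(1-\gamma)$ is nonnegative.
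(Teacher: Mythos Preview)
Your proof is correct and complete; it reaches the same conclusion with the same concentration exponent as the paper, but via a genuinely different route.

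The paper does \emph{not} compute $\mathbb{E}[Y]$ directly. Instead it runs a sequential argument: before the $i$th ball is thrown, the bins are reordered so that the empty good bins come first, and a surrogate variable $X_i \in \{-1,0,1\}$ is defined according to whether ball $i$ lands among the first $(\alpha-\beta)N$ (reordered) bins, in a bad bin, or elsewhere. One then checks $Y \ge \sum_i X_i$, bounds $\mathbb{E}[X_i] \ge \alpha - 2\beta - \gamma$, and applies Hoeffding's inequality to the $X_i$'s (each supported in $[-1,1]$) to obtain the $e^{-t^2 n/2}$ tail. Your approach instead bounds $\mathbb{E}[Y]$ head-on using the Weierstrass product inequality and a counting argument over good bins, and then invokes McDiarmid with Lipschitz constant $2$ on the independent ball placements. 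Both arguments hit the identical exponent because Hoeffding with range $2$ and McDiarmid with $c_i=2$ are the same bound. Your route is the more ``textbook'' one and is arguably cleaner: it avoids the dynamic reordering device entirely, and as a bonus your expectation bound $(\alpha-\beta-\gamma)n$ is actually a $\beta n$ stronger than what the paper's surrogate sum delivers (you then discard this slack to match the stated lemma). The paper's sequential trick, on the other hand, sidesteps any per-bin probability calculation, which some readers may find slicker.
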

\begin{proof}
We use a random variable $Y$ to denote the total number of good bins that contain exactly one ball.

We define $n$ random variables $X_1, X_2, \ldots, X_n \in \{-1,0,1\}$ as follows. Before throwing the $i$th ball into a random bin, we first reorder the bins in such a way that the good bins that are empty are in the front, so the first $(\alpha - \beta)N$ bins are always empty good bins. We define 
\[
    X_i = 
    \begin{cases}
    1 & \text{if the $i$th ball is thrown into the first $(\alpha - \beta)N$ bins,} \\
    0 & \text{if the $i$th ball is thrown into a bad bin,} \\
    -1 & \text{otherwise.}
    \end{cases}
\]

We have $Y \geq \sum_{i=1}^n X_i$. This is because for each $i$, if $X_i = 1$ (the ball is thrown into an empty good bin), $Y$ increases by $1$, and if $X_i = -1$ (the ball is thrown into a possibly non-empty good bin), $Y$ decreases by at most $1$. 

Observe that the $X_i$'s are independent, and $\forall i \in [n]$, we have $\Pr[X_i = 1] \geq (\alpha - \beta - \gamma) / (1 - \gamma)$ and $\Pr[X_i = -1] \leq \beta / (1 - \gamma)$, so we have
\begin{align*}
    \E[X_i] \geq 1 \cdot (\alpha - \beta - \gamma) / (1 - \gamma) + (-1) \cdot \beta / (1 - \gamma) \geq \alpha - 2 \beta - \gamma.
\end{align*}

Using Hoeffding's inequality, together with the fact that $X_i$ is bounded within the interval $[a_i, b_i]$ with $a_i = -1$ and $b_i = 1$,  we have
\begin{align*}
    \Pr\left[Y \leq (\alpha - 2 \beta - \gamma - t) n\right] &\leq \Pr\left[\sum_{i=1}^n X_i \leq (\alpha - 2 \beta - \gamma - t) n\right] \\
    &\leq \Pr\left[\frac{1}{n} \sum_{i=1}^n (X_i - \E[X_i]) \leq -t\right]\\
    &\leq \exp\left(- \frac{2n^2 t^2}{\sum_{i=1}^n (b_i - a_i)^2}\right)\\
    &\leq e^{-\frac{t^2 n}{2}}.\qedhere
\end{align*}
\end{proof}

Using the balls-into-bins procedure of~\cref{lem:good_bins_one_ball} and the ID assignment procedure from the proof of~\cite[Lemma 18]{ChangKPWZ17}, we design the following basic leader election algorithm. 

\begin{lemma}[Basic leader election algorithm]\label{lem:basic_alg}
Let $S$ be a set of devices, where all devices in $S$ agree on an integer $n' \geq 2$.
There is an algorithm in the $\nocd$  model achieving the following goals.
\begin{itemize}
    \item The algorithm costs $O(n')$ time and $O(1)$ energy in the worst case.
    \item The algorithm elects a leader with probability $1 - 2^{-\Omega(n')}$ if $n' / 3 \leq |S| \leq n'$.
\end{itemize}
\end{lemma}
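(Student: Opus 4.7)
The plan is to combine the balls-into-bins estimate of \cref{lem:good_bins_one_ball} with the deterministic ID-based protocol from the proof of \cite[Lemma 18]{ChangKPWZ17}. First, set $N = c\,n'$ for a suitable constant $c > 2$ (e.g., $c = 3$), and have each device $v \in S$ independently sample a uniformly random ID $r_v \in [N]$. The devices then run the deterministic $\nocd$ leader election protocol of \cite[Lemma 18]{ChangKPWZ17} over the ID space $[N]$: that protocol costs $O(N) = O(n')$ time and $O(1)$ energy per device, and it is designed to elect a leader provided at least one ID is held by exactly one device.

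Two observations will explain why duplicate IDs do not break the subroutine, and will supply the required success probability. Because the devices run a deterministic algorithm keyed on their ID, any two devices sharing an ID see identical channel histories and therefore execute identical action sequences throughout the protocol. Whenever such duplicates transmit they collide, and in the $\nocd$ model a collision is indistinguishable from silence from every listener's perspective --- so a group of duplicate-ID devices is ``invisible'' on the channel and cannot interfere with the progress of the protocol toward a singleton-ID winner. To bound the probability that a singleton ID exists, I would apply \cref{lem:good_bins_one_ball} with $\alpha = 1$ (every bin is good), $\gamma = 0$ (every device may land in any bin), $\beta = |S|/N \in [1/(3c),\,1/c]$, and a small constant $t$ (say $t = 1/6$); the hypothesis $\alpha > 2\beta + \gamma$ reduces to $c > 2$, which $c = 3$ satisfies. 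The conclusion then gives that with probability at least $1 - e^{-t^2 |S|/2} = 1 - 2^{-\Omega(n')}$, at least $(1 - 2\beta - \gamma - t)|S| = \Omega(n')$ IDs are singletons, so in particular at least one singleton ID exists and the subroutine elects a leader. The claimed $O(n')$ time and $O(1)$ energy bounds are inherited directly from the subroutine of \cite[Lemma 18]{ChangKPWZ17}.

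The main technical obstacle I expect is the first step: carefully verifying that the ID-assignment protocol of \cite[Lemma 18]{ChangKPWZ17} really does ignore duplicate-ID groups in the manner described, so that the mere existence of a single singleton ID suffices to elect a leader (and so that all non-leader devices correctly learn the leader's identity). Once that invariant is in hand, the plug-in of \cref{lem:good_bins_one_ball} together with the parameter choice above is a routine computation, and the time and energy bounds match the conclusion of \cref{lem:basic_alg} without further work.
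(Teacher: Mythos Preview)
The ``duplicates are invisible'' invariant you hope to verify is actually false. Suppose in some round the subroutine schedules two distinct IDs to transmit: a singleton ID $j$ (held only by $w$) and a duplicate ID $j'$ (held by $u$ and $v$). In the real execution $w,u,v$ all transmit and every listener hears silence; in the ``duplicates removed'' execution only $w$ transmits and listeners hear $w$'s message. So the channel histories diverge, and singleton devices' subsequent behavior need not match the clean execution at all. Every known $O(1)$-energy deterministic $\nocd$ protocol---in particular \cite[Theorem~2]{Chang2021detLE}, which is the relevant subroutine---has rounds with multiple transmitting IDs, so this scenario cannot be ruled out. There is a dual failure mode as well: two devices sharing an ID see identical histories throughout (each listens exactly when the other does, and transmitters get no feedback in $\nocd$), so if the subroutine would ever declare that ID the winner, \emph{both} output ``leader''.

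This is why the paper does not skip straight to leader election but first runs an explicit ID-assignment phase---which is what the proof of \cite[Lemma~18]{ChangKPWZ17} actually supplies (an assignment primitive, not a leader-election protocol). Each device picks one random transmitting slot and several random listening slots in $[N']$ with $N'=100n'$, and keeps its slot only if it receives a confirmation, guaranteeing the surviving IDs are distinct. \cref{lem:good_bins_one_ball} is invoked twice (once for listening slots, once for transmitting slots) to show that the assigned set $S'$ has $|S'|=\Theta(N')$ with probability $1-2^{-\Omega(n')}$. Only then is the deterministic leader election of \cite[Theorem~2]{Chang2021detLE} run on $S'$; its $O\bigl(\log(N'/|S'|)\bigr)$ energy bound becomes $O(1)$ precisely because $|S'|=\Theta(N')$.
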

\begin{proof}
We define  $N' = 100n'$.
The algorithm has two phases. 

\paragraph{ID assignment.}
In Phase I, the devices in $S$ contend for the IDs in the ID space $[N']$. Each device $v$ first picks a single \emph{transmitting ID} uniformly at random from $[N']$ and then picks $30$ \emph{listening IDs} uniformly  at random  from the remaining $N'-1$ IDs with duplicates.

For each $i \in [N']$, there are two rounds, and each device has three possible actions:
\begin{enumerate}
    \item If $i$ is a transmitting ID for the device $v$, then $v$ transmits in the first round and listens in the second round.
    \item If $i$ is a listening ID for the device $v$, then $v$ listens in the first round, and $v$ transmits in the second round if $v$ received a message in the first round, otherwise $v$ stays idle in the second round.
    \item If $i$ is neither a transmitting ID nor a listening ID for the device $v$, then $v$ stays idle in both two rounds.
\end{enumerate}

If $i$ is a transmitting ID for a device $v$ and $v$ receives a message in the second round, then the ID $i$ is assigned to this device $v$. It is straightforward to verify that each ID is assigned to at most one device. The algorithm costs $O(n')$ time and $O(1)$ energy.
We write $S' \subseteq S$ to denote the set of the devices that are assigned IDs.

\paragraph{Leader election.}
In Phase II, the devices in $S'$ elect a leader by running the deterministic algorithm of \cite[Theorem 2]{Chang2021detLE} over the space $[N']$. 
The algorithm costs $O(N') = O(n')$ time and $O\left(\log \frac{N'}{|S'|}\right)$ energy. 

For the case $n' / 3 \leq |S| \leq n'$, we will later show that with probability $1 - 2^{-\Omega(n')}$ we have $|S'| = \Theta(N')$, so the leader election algorithm of \cite[Theorem 2]{Chang2021detLE} costs $O(1)$ energy, as required. To ensure that the energy usage of our algorithm never exceeds $O(1)$, we let each device stop participating in the leader algorithm once its energy usage exceeds the required upper bound $O(1)$.

\paragraph{Analysis.} For the rest of the proof, we show that with probability $1 - 2^{-\Omega(n')}$, the number of IDs that are assigned is $\Theta(N')$. Recall that an ID $i \in [N']$ is assigned if it is a listening ID for exactly one device and a transmitting ID for exactly one device.

First, consider the listening IDs. We use \cref{lem:good_bins_one_ball} with $N = N'$, $\alpha=1$, $\beta=30|S|/N'$, $\gamma= 1/N'$, and $t=1/200$. Here we interpret the ID space $[N']$ as $N = N'$ bins, where all of them are good. The $30 |S| = \beta N'$ random choices of listening IDs are seen as $\beta N'$ balls. 
Each ball is thrown to a random bin in a subset of $N' - 1 = (1-\gamma)N$ bins.
We have $0.1 \leq \beta \leq 0.3$ since $n' / 3 \leq |S| \leq n'$ and $N' = 100n'$.
We have $\gamma \leq 1/200$ since $n' \geq 2$ and $N' = 100n'$. 
Hence $\alpha - 2 \beta - \gamma - t \geq 1 - 0.6 - 0.005 - 0.005 > 1/3$, so the probability that the number of IDs in $[N]$ that are assigned as a listening ID to exactly one device is at least   $(\alpha - 2 \beta - \gamma - t)\beta N \geq \beta N /3 \geq N/30$ with probability at least $1 - 2^{-\Omega(n')}$, by  \cref{lem:good_bins_one_ball}.

Next, consider the transmitting IDs. We use \cref{lem:good_bins_one_ball} with $N = N'$, $\alpha=1/30$, $\beta=|S|/N'$, $\gamma=0$, and $t=0.01$.
Again, we interpret the ID space $[N']$ as $N = N'$ bins, but only the ones assigned as a listening ID to exactly one device are considered good. In the following analysis, we condition on the event that the number of IDs assigned as a listening ID to exactly one device is at least $N/30 = \alpha N'$, which occurs with probability $1 - 2^{-\Omega(n')}$. The $|S| = \beta N'$ random choices transmitting IDs are seen as $\beta N'$ balls. 
Each ball is thrown to a bin uniformly at random from the set of all $N' = (1-\gamma)N$ bins. We have $0.01/3 \leq \beta \leq 0.01$ since $n' / 3 \leq |S| \leq n'$ and $N' = 100n'$. 
Hence $\alpha - 2 \beta - \gamma - t \geq 1/30 - 0.02 - 0 - 0.01  = 1/300$, and the probability that the number of IDs in $[N]$ that are assigned as a transmitting ID to exactly one device and assigned as a listening ID to exactly one device   is at least $(\alpha - 2 \beta - \gamma - t)\beta N \geq \beta N /300 \geq N/90000$ with probability at least $1 - 2^{-\Omega(n')}$, as required.
\end{proof}

\paragraph{Main algorithm.}
We extend \cref{lem:basic_alg} to cope with a general failure probability parameter $0 < f < 1$. We show that leader election can be done in worst-case $O(\log f^{-1})$ time and expected $O(\tilde{n}^{-1} \log f^{-1})$ energy, see \cref{lem:basic_alg2} for the precise specification of our algorithm. Later in \cref{sect:derandomize} we will show that both the time and energy complexities of our algorithm are optimal.

Since the expected energy cost  $O(\tilde{n}^{-1} \log f^{-1})$ can be much smaller than one, it is implicit in the statement of \cref{lem:basic_alg2} that 
each device participates in the algorithm with probability $p = \min\{1, O(\tilde{n}^{-1} \log f^{-1}) \}$ independently.
 If a device $v$ chooses to not participate in the algorithm, then $v$ stays idle throughout the algorithm and  its energy usage is zero.
 If a device $v$ chooses to participate in the algorithm, then the energy cost of $v$ is at most $O(1 + \tilde{n}^{-1} \log f^{-1})$.

 In \cref{lem:basic_alg2}, we do not require all devices to be informed whether a leader is elected by the end of the algorithm. Indeed, if $f = 2^{-o(\tilde{n})}$, then the expected energy cost is much less than one, so a majority of the devices do not participate in the algorithm at all, and these devices cannot know the outcome of the algorithm as they remain idle throughout the algorithm.

We can allocate one additional round after running the algorithm of \cref{lem:basic_alg2} to let the elected leader speak to the rest of the devices. This costs one unit of energy for all devices. We choose to not include this step in the algorithm of \cref{lem:basic_alg2} because this will break the bound $O(\tilde{n}^{-1} \log f^{-1})$ of the expected energy cost. Of course, for the case $\tilde{n}^{-1} \log f^{-1} = \Omega(1)$, we can assume that the outcome of the leader election is known to all devices.

\begin{restatable}[Leader election given a network size estimate $\tilde{n}$]{theorem}{thmbasic}
\label{lem:basic_alg2}
Given a number $0 < f < 1$ and an integer $\tilde{n} \geq 2$, there is an algorithm $\mathcal{A}$ achieving the following goals.
\begin{description}
    \item[Expected energy:] The expected energy cost for a device is $O(\tilde{n}^{-1} \log f^{-1})$. 
    \item[Worst-case energy:] The worst-case energy cost for a device is $O(1 + \tilde{n}^{-1} \log f^{-1})$.
    \item[Time:] The time complexity of the algorithm is $O(\log f^{-1})$, which is a fixed number independent of the random bits used by the algorithm.
    \item[Leader election:] By the end of the algorithm, at most one device identifies itself as a leader. If   $\tilde{n}/2 < n \leq \tilde{n}$, then the probability that no leader is elected is at most $f$.
\end{description}
\end{restatable}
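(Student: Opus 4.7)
The plan is to amplify \cref{lem:basic_alg}, which only guarantees constant failure probability, into an $f$-failure guarantee by combining random sub-sampling with sequential repetition, splitting into two regimes depending on the size of $\tilde{n}$ relative to $\log f^{-1}$. In both regimes, each device independently participates with probability $p := \min\{1,\,C \log f^{-1}/\tilde{n}\}$ for a sufficiently large constant $C$, and only participants ever transmit or listen. This choice of $p$ already yields the expected per-device energy bound $O(p) = O(\tilde{n}^{-1}\log f^{-1})$ provided the rest of the algorithm costs each participant only $O(1+\tilde{n}^{-1}\log f^{-1})$ energy, which I will verify case by case.

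When $p < 1$, i.e., $\tilde{n} \geq C \log f^{-1}$, the sampled set $S$ satisfies $\E[|S|] \in (L, 2L]$ with $L := C \log f^{-1}/2$, and a multiplicative Chernoff bound (with $C$ large enough) traps $|S|$ inside the window $[L/2, 4L]$ with probability $\geq 1 - f/4$. I would cover this window using just two invocations of \cref{lem:basic_alg}, with $n'_1 = 3L/2$ (covering $|S| \in [L/2, 3L/2]$) and $n'_2 = 4L$ (covering $|S| \in [4L/3, 4L]$). Each call runs in $O(\log f^{-1})$ time and uses $O(1)$ energy per participant, and, conditioned on $|S|$ landing in its sub-window, succeeds with probability $1 - 2^{-\Omega(\log f^{-1})}$, which can be made at least $1 - f/4$ by enlarging $C$. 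When $p = 1$, i.e., $\tilde{n} < C \log f^{-1}$, every device participates, so $|S| = n \in (\tilde{n}/2, \tilde{n}] \subseteq [\tilde{n}/3, \tilde{n}]$ and \cref{lem:basic_alg} applies directly with $n' = \tilde{n}$. I would then repeat the basic routine $k = \Theta(\tilde{n}^{-1}\log f^{-1})$ times with fresh randomness, driving the aggregate failure probability to $2^{-\Omega(k\tilde{n})} \leq f$, the total time to $O(k\tilde{n}) = O(\log f^{-1})$, and the per-device energy to $O(k) = O(1 + \tilde{n}^{-1}\log f^{-1})$.

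To enforce the ``at most one leader'' guarantee and enable early termination, I would insert one announcement round after each invocation of \cref{lem:basic_alg}: the newly elected leader (if any) transmits while the other participants listen, and upon hearing the message all other participants abort while the announcer remains the unique leader. This costs only $O(1)$ extra energy per participant in Case 1 and $O(k)$ extra energy in Case 2, both absorbed into the target bounds; non-participants stay idle throughout, never claim leadership, and so need not be informed explicitly.

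The main technical obstacle is the Case 1 parameter tuning: the constants $p$, $L$, $n'_1$, $n'_2$, and $C$ must be chosen simultaneously so that the two sub-windows $[L/2, 3L/2]$ and $[4L/3, 4L]$ indeed cover the Chernoff envelope $[L/2, 4L]$, that the tail mass outside the envelope is at most $f/4$, and that each invocation's intrinsic $2^{-\Omega(n')}$ failure probability is also at most $f/4$. Fortunately all three constraints are controlled by the single parameter $C$, so taking $C$ large enough makes every failure mode simultaneously $\leq f/4$, which combines into an overall failure probability $\leq f$ as required, while keeping the time budget at $O(\log f^{-1})$.
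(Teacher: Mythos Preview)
Your proposal is correct and follows essentially the same two-regime strategy as the paper: sub-sample down to a set of size $\Theta(\log f^{-1})$ and invoke \cref{lem:basic_alg} when $\tilde{n}$ is large relative to $\log f^{-1}$, and repeat \cref{lem:basic_alg} $\Theta(\tilde{n}^{-1}\log f^{-1})$ times otherwise. The only tactical difference is that in the large-$\tilde{n}$ regime the paper tunes the sampling probability to $0.9\,n'/\tilde{n}$ so that a \emph{single} call with parameter $n'$ suffices (picking $n'$ minimal with $f_1+f_2\le f$), whereas you cover a slightly wider Chernoff envelope with two calls; both yield the same bounds. Your explicit announcement rounds are a clean way to enforce ``at most one leader'' across repetitions, a detail the paper leaves implicit.
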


\begin{proof}
For each integer $2 \leq n' \leq \tilde{n}$, we define $f_1 = 2^{-\Omega(n')}$ to be the failure probability of leader election in \cref{lem:basic_alg} with parameter $n'$, and we define $f_2 = 2^{-\Omega(n')}$ to be the probability defined as follows. Assuming that the actual network size $n$ satisfies  $\tilde{n}/2 < n \leq \tilde{n}$, pick a subset of devices $S$ by including each device with probability $0.9 \cdot \left(n' / \tilde{n}\right)$ independently. Let $f_2$ be the maximum probability that the inequality $n' / 3 \leq |S| \leq n'$ does not hold, where the maximum ranges over all $n$ satisfying $\tilde{n}/2 < n \leq \tilde{n}$.
By a Chernoff bound, we have $f_2 = 2^{-\Omega(n')}$.

Suppose that there exists a number $2 \leq n' \leq \tilde{n}$ satisfying  $f_1 + f_2 \leq f$, then we simply pick a subset of devices $S$ by including each device with probability $0.9 \cdot\left(n' / \tilde{n}\right)$ and run the algorithm of \cref{lem:basic_alg} with $n'$ and $S$. Such a number $n'$ must satisfies $n' = O(\log f^{-1})$, so each device participates in the algorithm with probability $0.9 \cdot \left(n' / \tilde{n}\right)= O(\tilde{n}^{-1} \log f^{-1})$ independently.
Since the energy cost in \cref{lem:basic_alg}  is $O(1)$, the expected and worst-case energy cost of our algorithm are  $O(\tilde{n}^{-1} \log f^{-1})$ and $O(1)$. By \cref{lem:basic_alg}, the time complexity of our algorithm is $O(n') = O(\log f^{-1})$. A leader is guaranteed to be elected if  $n' / 3 \leq |S| \leq n'$ and the algorithm of \cref{lem:basic_alg} succeeds. This occurs with probability at least $1 - f_1 - f_2 = 1 - f$ whenever $\tilde{n}/2 < n \leq \tilde{n}$.

Suppose that there does not exist a number $2 \leq n' \leq \tilde{n}$ with $f_1 + f_2 \leq f$. Then we must have $f = 2^{-\Omega(\tilde{n})}$ and so $\tilde{n}^{-1} \log f^{-1} = \Omega(1)$. In this case, we simply let $S$ be the set of all devices and run the algorithm of \cref{lem:basic_alg} with $n' = \tilde{n}$ for $C = \Theta(\tilde{n}^{-1} \log f^{-1})$ times. For the case $\tilde{n}/2 < n \leq \tilde{n}$, the probability that no leader is elected in all $C$ iterations is $2^{-\Omega(C\tilde{n})} = 2^{-\Omega(\log f^{-1})}$, which can be made at most $f$ by selecting $C = \Theta(\tilde{n}^{-1} \log f^{-1})$ to be sufficiently large. From the time and energy complexities specified in  \cref{lem:basic_alg}, it is clear the our algorithm satisfies all the requirements in the statement of \cref{lem:basic_alg2}.
\end{proof}

\subsection{Multiple Instances}\label{sect:multi}

Suppose that we are given $k$ pairs $(\tilde{n}_1, c_1), (\tilde{n}_2, c_2), \ldots, (\tilde{n}_k, c_k)$ such that we need to invoke the algorithm of \cref{lem:basic_alg2} with parameters $\tilde{n}_i$ and $f = 2^{-c_i}$ for each $1 \leq i \leq k$. A simple calculation shows that the worst-case energy complexity of the combined algorithm is $O\left(k + \sum_{j=1}^k c_i \tilde{n_i}^{-1} \right)$. 
This bound  can be improved to $O\left(1 + \sum_{j=1}^k c_i \tilde{n_i}^{-1} \right)$ at the cost of allowing the random bits used in different executions of  the algorithm of \cref{lem:basic_alg2} to correlate.

Consider the probability $p_i = \min\{1, O(\tilde{n}_i^{-1} c_i) \}$  that a device participates in an execution of the algorithm of \cref{lem:basic_alg2} with parameters $\tilde{n} = \tilde{n}_i$ and $f = 2^{-c_i}$. We partition the indices $\{1,2, \ldots, k\}$ into groups $[k] = B_1 \cup B_2 \cup \cdots \cup B_s$ such that $1/2 \leq \sum_{i \in B_j} p_i \leq 1$ for each $1 \leq j < s$ and $\sum_{i \in B_j} p_i \leq 1$ for $j=s$. It is clear that such a partition exists, and the  number $s$ of groups is at most $1 + \sum_{j=1}^k 2 p_j = O\left(1 + \sum_{j=1}^k c_j \tilde{n}_j^{-1} \right)$.

Consider any group $B_j$.  Since $\sum_{i \in B_j} p_i \leq 1$, we can ensure that each device only participate in at most one execution of the algorithm of \cref{lem:basic_alg2} associated with the indices in $B_j$. Specifically, we let each device $v$ samples a random variable $x$ such that $x = i$ with probability $p_{i}$ for each $i \in B_j$. Then each device $v$ participates in the execution of the algorithm of \cref{lem:basic_alg2} associated with the index $i \in B_j$ if $x = i$.
Hence the worst-case energy cost for invoking the algorithm of \cref{lem:basic_alg2} for all indices $i \in B_j$ is $\max_{i \in B_j} O(1 + \tilde{n}_i^{-1} c_i) =  O\left( 1 + \sum_{i \in B_j} c_i \tilde{n}_i^{-1} \right)$
 instead of the bound $ O\left( \sum_{i \in B_j} \left(1 + c_i \tilde{n}_i^{-1}\right) \right) = O\left(|B_j| + \sum_{i \in B_j}  c_i \tilde{n}_i^{-1} \right)$ given by the straightforward summation. 

Going over all groups $B_1, B_2, \ldots, B_s$, the overall worst-case energy cost for invoking the algorithm of \cref{lem:basic_alg2} with parameters $(\tilde{n}_1, c_1), (\tilde{n}_2, c_2), \ldots, (\tilde{n}_k, c_k)$  is
\[
 O\left( \sum_{j=1}^s \left(1 + \sum_{i \in B_j} c_i \tilde{n}_i^{-1}\right) \right) 
 = O\left( s +  \sum_{i=1}^k c_i \tilde{n}_i^{-1} \right)
 = O\left(1 + \sum_{i=1}^k c_i \tilde{n}_i^{-1} \right). 
\]

We summarize the discussion as a lemma. This lemma will be used in the algorithms of \cref{sect-rand-main}.

\begin{lemma}[Multiple instances]\label{lem:basic_alg_multi}
Given parameters  $(\tilde{n}_1, c_1), (\tilde{n}_2, c_2), \ldots, (\tilde{n}_k, c_k)$, there is an algorithm $\mathcal{A}$ achieving the following goals in the $\nocd$ model.
\begin{description}
    \item[Energy:] The worst-case energy cost for a device is $O\left( 1 + \sum_{j=1}^k c_j \tilde{n}_j^{-1} \right)$.
    \item[Time:] The time complexity of the algorithm is $O\left( \sum_{j=1}^k c_j \right)$, which is a fixed number independent of the random bits used by the algorithm.
    \item[Leader election:] For each $1 \leq j \leq k$, at most one device identifies itself as the $j$th leader. The probability that the $j$th leader is not elected is at most $2^{-c_j}$ if $\tilde{n}_j/2 < n \leq \tilde{n}_j$.
\end{description}
\end{lemma}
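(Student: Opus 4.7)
The plan is to run the algorithm of \cref{lem:basic_alg2} once for each pair $(\tilde n_j, c_j)$, but to reuse devices' randomness across instances in a carefully correlated way so that the per-instance $+1$ term in the worst-case energy bound of \cref{lem:basic_alg2} does not accumulate to $+k$. The overall structure mirrors the bucketing discussion preceding the lemma in \cref{sect:multi}.

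First I would write $p_j = \min\{1, C \tilde n_j^{-1} c_j\}$ for the participation probability of \cref{lem:basic_alg2} on instance $j$, where $C$ is the hidden constant in its expected-energy bound. I would then greedily partition $[k] = B_1 \cup \cdots \cup B_s$ so that $\sum_{i \in B_\ell} p_i \leq 1$ for every $\ell$ and $\sum_{i \in B_\ell} p_i \geq 1/2$ for every $\ell < s$. A standard greedy argument (keep adding indices until the next one would overflow $1$) produces such a partition, and forces $s = O\bigl(1 + \sum_j p_j\bigr) = O\bigl(1 + \sum_j c_j \tilde n_j^{-1}\bigr)$.

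Next, within each bucket $B_\ell$, I would couple the participation decisions across its instances. Each device draws a single random variable $X_\ell$ taking value $i \in B_\ell$ with probability $p_i$ and value $\bot$ with the remaining probability (which is nonnegative since $\sum_{i \in B_\ell} p_i \leq 1$); the device then participates in the $j$th instance of \cref{lem:basic_alg2} iff $X_\ell = j$ for the bucket $\ell$ containing $j$. Crucially, the \emph{marginal} distribution of each instance is unchanged: each device still participates in instance $j$ independently with probability $p_j$ (independence across devices is preserved because each device samples its own $X_\ell$ freshly). Hence the per-instance guarantee of \cref{lem:basic_alg2} (at most one leader; failure probability $\leq 2^{-c_j}$ whenever $\tilde n_j / 2 < n \leq \tilde n_j$) is retained verbatim for every $j$.

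Finally I would sum up the costs. Sequentially running the instances in order gives a deterministic time bound of $\sum_j O(c_j) = O(\sum_j c_j)$, since \cref{lem:basic_alg2} has deterministic time $O(\log f^{-1}) = O(c_j)$. Within one bucket $B_\ell$, a device participates in at most one instance, so its worst-case energy from that bucket is at most $\max_{i \in B_\ell} O(1 + \tilde n_i^{-1} c_i) = O\bigl(1 + \sum_{i \in B_\ell} \tilde n_i^{-1} c_i\bigr)$. Summing over the $s$ buckets gives total worst-case energy $O\bigl(s + \sum_j c_j \tilde n_j^{-1}\bigr) = O\bigl(1 + \sum_j c_j \tilde n_j^{-1}\bigr)$, matching the claim. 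The only delicate step is verifying that the coupling inside a bucket preserves the correctness of \cref{lem:basic_alg2} per instance; this will be the main point to argue carefully, and it follows because that lemma's guarantee depends only on the marginal distribution over which devices join and on the independence of their internal randomness, both of which are unaffected by the coupling.
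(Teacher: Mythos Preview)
Your proposal is correct and follows essentially the same approach as the paper: the paper's proof (the discussion in \cref{sect:multi} immediately preceding the lemma) also introduces the participation probabilities $p_j$, greedily partitions $[k]$ into buckets with $\sum_{i\in B_\ell} p_i \in [1/2,1]$, couples participation inside each bucket via a single categorical variable, and sums the per-bucket worst-case energy to get $O(s+\sum_j c_j\tilde n_j^{-1})=O(1+\sum_j c_j\tilde n_j^{-1})$. Your extra remark that the coupling preserves each instance's marginal participation distribution (and hence the guarantees of \cref{lem:basic_alg2}) is the one point the paper leaves implicit, so if anything your write-up is slightly more careful.
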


\subsection{Derandomization}\label{sect:derandomize}

By derandomizing \cref{lem:basic_alg2}, we obtain an optimal deterministic leader election algorithm for the $\nocd$ model with  $O(n \log \frac{N}{n})$ time and $O(\log \frac{N}{n})$ energy, where the size  $N$ of the ID space $[N]$  and an estimate $\tilde{n}$ of the  number of devices $n$ such that  $\tilde{n}/2 < n \leq \tilde{n}$ are both known to all devices.
As we will later see, the optimality of the deterministic algorithm implies the optimality of \cref{lem:basic_alg2}.

\thmoptdet*

\begin{proof}
In  \cref{lem:basic_alg2} we pick \[f = \frac{1}{1 + \sum_{\tilde{n}/2 < n \leq \tilde{n}} \binom{N}{n}}.\]
Consider an assignment $\phi$ from $[N]$ to an infinite sequence of random bits.
When we run the randomized algorithm of \cref{lem:basic_alg2} on a device $v$ with identifier $i$, $v$ uses the random bits $\phi(i)$.
For each fixed size-$n$ subset of $[N]$ with $\tilde{n}/2 < n \leq \tilde{n}$, the simulation of the randomized algorithm of \cref{lem:basic_alg2} successfully elects a leader  with probability at least $1 - f$.
By a union bound, there is a non-zero probability that the simulation of the randomized algorithm of \cref{lem:basic_alg2} succeeds for all size-$n$ subsets of $[N]$ such that  $\tilde{n}/2 < n \leq \tilde{n}$. This non-zero probability implies the existence of a deterministic algorithm that works for all size-$n$ subsets of $[N]$ such that  $\tilde{n}/2 < n \leq \tilde{n}$. Since $\tilde{n}/2 < n \leq \tilde{n}$ and $\log f^{-1} =  O\left(n \log \frac{N}{n}\right)$, this  algorithm has time complexity $T = O(\log f^{-1}) = O\left(n \log \frac{N}{n}\right)$ and energy complexity $E  = O(\tilde{n}^{-1} \log f^{-1}) = O\left(\log \frac{N}{n}\right)$, by \cref{lem:basic_alg2}.
\end{proof}

\paragraph{Optimality of \cref{thm:optimal_algo_no_cd}.} The algorithm of \cref{thm:optimal_algo_no_cd} is energy-optimal in $\nocd$ due to a   matching energy lower bound of $\Omega\left(\log \frac{N}{n}\right)$ in~\cite[Theorem 3]{Chang2021detLE}, which even applies to the case where the number $n$ of devices is  global knowledge and $n$ can be any integer in the range $[2, N-1]$.

The algorithm of \cref{thm:optimal_algo_no_cd} is  also time-optimal in $\nocd$. It is implicit in the proof of~\cite[Theorem 3.3]{clementi2003distributed} that there is a time lower bound of  $\Omega\left(n \log \frac{N}{n}\right)$ which applies to the case where the number $n$ of devices is  global knowledge and $n$ can be any integer in the range $[2, N-1]$ that is a power of $2$.

Specifically, a $t$-time deterministic leader election algorithm that works for $n$ and $N$ in $\nocd$ can be seen as a family $\mathcal{F}$ of $t$ subsets of $[N]$ such that for each size-$n$ subset $S$ of $N$ there is a subset $F \in \mathcal{F}$ with $|F \cap S| = 1$.

For the case that $2 \leq n \leq N/64$ and $n$ is a power of $2$, 
the proof of~\cite[Theorem 3.3]{clementi2003distributed} shows that such a family $\mathcal{F}$ must have size $t = \Omega \left(n\log \frac{N}{n}\right)$. Here the parameter $k$ in the proof of~\cite[Theorem 3.3]{clementi2003distributed} is our $n$, and the parameter $n$  in the proof of~\cite[Theorem 3.3]{clementi2003distributed} is our $N$. Since $k$ is a power of $2$, we have $k'=k$  in the proof of~\cite[Theorem 3.3]{clementi2003distributed}. Hence a time lower bound of  $\Omega\left(n \log \frac{N}{n}\right)$ is obtained.

For the remaining case that $N/64 < n \leq N-1$, we can apply the $\Omega(N) = \Omega\left(n \log \frac{N}{n}\right)$ time lower bound of~\cite[Theorem 1.6]{JurdzinskiKZ02podc}.

\paragraph{Optimality of \cref{lem:basic_alg2}.} 
A corollary of above discussion is that the time $O(\log f^{-1})$ and energy $O(\tilde{n}^{-1} \log f^{-1})$ complexities of \cref{lem:basic_alg2} are optimal. If  the time complexity of \cref{thm:optimal_algo_no_cd} can be made $o(\log f^{-1})$, then we can derandomize it to give a deterministic leader election algorithm whose time complexity is $o\left(n \log \frac{N}{n}\right)$, violating the $\Omega\left(n \log \frac{N}{n}\right)$ lower bound.
Similarly, if the energy complexity of \cref{thm:optimal_algo_no_cd} can be made $o(\tilde{n}^{-1}  \log f^{-1})$, then   we can derandomize it to give a deterministic leader election algorithm whose energy complexity is $o\left(\log \frac{N}{n}\right)$, violating the $\Omega\left(\log \frac{N}{n}\right)$ lower bound.

\section{Leader Election with an Unknown Number of Devices}\label{sect-rand-main}

In this section, we design randomized leader election algorithms in the model  in the scenario where the number of devices $n$ is completely unknown.  In \cref{sect-framework} we describe the  framework for our leader election algorithms. To illustrate the use of our framework. in \cref{sect-nocd-UB}, we reprove a result of~\cite{lavault2007quasi} using our framework
that in $\nocd$ a leader can be elected in expected $O(\log n)$ time and using expected $O(\log \log n)$ energy.

\subsection{Basic Framework}\label{sect-framework}

Our randomized leader election algorithms proceeds in iterations. We will specify an infinite sequence of deadlines $(d_1, d_2, \ldots)$ such that the $i$th iteration finishes by time  $O(d_i)$. Alternatively, we require the algorithm of the $i$th iteration to take at most $O(d_i - d_{i-1})$ time, and we let $d_0 = 0$ for convenience.

\paragraph{One iteration.}
Each iteration of the algorithm is specified by a list of pairs of positive integers \[(\tilde{n}_1, c_1), (\tilde{n}_2, c_2), \ldots, (\tilde{n}_k, c_k).\] Given such a list, the algorithm of one iteration has the following three parts. 
\begin{description}
    \item[Using the basic subroutine:] In the first part, we run the algorithm of \cref{lem:basic_alg_multi} with parameters $(\tilde{n}_1, c_1), (\tilde{n}_2, c_2), \ldots, (\tilde{n}_k, c_k)$. 
    \item[Leader election:] In the second part, 
we allocate $O(k)$ rounds to elect one leader among the at most $k$ leaders elected during the execution of the algorithm of \cref{lem:basic_alg_multi}.  More specifically, we allocate an ID space $[N]$ with $N=k$. For each $1 \leq j \leq k$, if a device $v$ is the $j$th leader elected in  \cref{lem:basic_alg_multi}, then $v$  grabs the ID $j$. Then we run the well-known $O(N)$-time and $O(\log N)$-energy deterministic $\nocd$ algorithm~\cite{ChangKPWZ17} to elect a single leader among these devices. It is possible that a device $v$ grabs multiple IDs. In this case, the  ID of $v$ in the deterministic algorithm is set to be the smallest identifier among all identifiers that $v$ grabs.
\item[Announcing the result:] In the third part, we  allocate one round to let the leader elected during the second part of the algorithm speak, while all other devices listen. 
If a leader is elected, the algorithm terminates. 
Otherwise, the algorithm moves on to the next iteration.
\end{description}

It is possible that no leader is elected, as there is no guarantee that the algorithm of \cref{lem:basic_alg_multi} must elect a leader.

\paragraph{Time.} The time complexity of the algorithm is dominated by the round complexity of the first part, which is $O\left(\sum_{j=1}^k c_j \right)$ according to \cref{lem:basic_alg_multi}. Here we assume that each $c_j$ is a positive integer, so $\sum_{j=1}^k c_j \geq k$.  The number of rounds used by the algorithm is fixed independent of the randomness, so the time complexity upper bound holds in the worst case.

\paragraph{Energy.}   For the first part of the algorithm, the energy complexity is $O\left(1 + \sum_{j=1}^k c_i \tilde{n}_i^{-1} \right)$ by \cref{lem:basic_alg_multi}.
For the second part of the algorithm, the energy cost is $O(\log k)$ among those devices participating in the leader election algorithm, and it is zero for the rest of the devices. For the third part of the algorithm, the energy cost is one unit for all devices. All these upper bounds hold in the worst case.

\subsection{An \texorpdfstring{$O(\log n)$}{O(log n)}-time and \texorpdfstring{$O(\log \log n)$}{O(log log n)}-energy Algorithm in \texorpdfstring{$\nocd$}{No-CD}}\label{sect-nocd-UB}

Using the framework of algorithm design in \cref{sect-framework}, we reprove a result of~\cite{lavault2007quasi}
that in the $\nocd$ model a leader among an unknown number $n \geq 2$ of devices can be elected using $O(\log n)$ time and $O(\log \log n)$ energy in expectation.  Our algorithm is described by the following choice of parameters.

\begin{description}
    \item[The deadline sequence:] We set $d_i = 2^i$ for each $i \geq 1$.
    \item[The parameters for each iteration:] The algorithm of the $i$th iteration is specified by the list of pairs $(\tilde{n}, 2)$ for all $\tilde{n} = 2^{1}, 2^2, \ldots, 2^{d_i - 1}$.
\end{description}

With the above choice  of parameters, the algorithm for the $i$th iteration takes $O\left(\sum_{j=1}^{d_i-1} 2\right) = O(d_i) = O(d_i - d_{i-1})$ time, so the choice of the deadline sequence is valid.

\paragraph{Failure probability.} Suppose that we run the algorithm on a network of $n \geq 2$ devices. We pick $\tilde{n}=2^{j^\ast}$ to be the number such that  $\tilde{n}/2 < n \leq \tilde{n}$ and $j^\ast$ is an integer. Let $i^\ast$ be the smallest index $i$ such that $\log \tilde{n} < d_i$, so the pair $(\tilde{n}, 2)$ is included in the specification for each iteration $i \geq i^\ast$. Therefore, for each $i \geq i^\ast$, if we run the $i$th iteration of the algorithm on a network of $n \geq 2$ devices, then the probability that no leader is elected in this iteration is at most $1/4$, as $\tilde{n}/2 < n \leq \tilde{n}$ and  the pair $(\tilde{n}, 2)$ is considered in the $i$th iteration.
We define
\[
f_i = 
\begin{cases}
1 & i < i^\ast,\\
2^{-2(i - i^\ast + 1)} & i \geq i^\ast,
\end{cases}
\]
so  
$f_i$ is an upper bound on the probability that no leader is elected by the end of the $i$th iteration. In other words, the probability that the algorithm enters the $(i+1)$th iteration is at most $f_i$.


\paragraph{Expected time complexity.}  Since the time spent on the $(i+1)$th iteration is $O(d_{i+1} - d_{i})$, the expected time complexity $T(n)$ of the algorithm can be upper bounded as follows.
\begin{align*}
T(n)
& = O(d_{i^\ast}) + \sum_{i = i^\ast}^\infty f_i \cdot O(d_{i+1} - d_{i})\\
& = O(2^{i^\ast}) + \sum_{i = i^\ast}^\infty 2^{-2(i - i^\ast + 1)} \cdot O(d_{i})\\
& = O(2^{i^\ast}) + \sum_{i = i^\ast}^\infty O\left(2^{i-2(i - i^\ast + 1)}\right)\\
& = O(2^{i^\ast}) \cdot O\left(1 + \sum_{i = i^\ast}^\infty 2^{(i - i^\ast)-2(i - i^\ast + 1)}\right)\\
& = O(2^{i^\ast}) \cdot O\left(1 + \sum_{j = 0}^\infty 2^{-2-j}\right)\\
&= O(2^{i^\ast}) \\
&= O(\log n).
\end{align*}
Here we use the property that $d_{i^\ast} = 2^{i^\ast} = O(\log n)$ by our choice of $i^\ast$.




\paragraph{Expected energy complexity.} 
The energy cost for the algorithm of one iteration depends on whether a leader is elected in this iteration. Given the parameters $(\tilde{n}_1, c_1), (\tilde{n}_2, c_2), \ldots, (\tilde{n}_k, c_k)$,  the energy cost is $O\left(1 + \sum_{j=1}^k c_j \tilde{n}_j^{-1} \right)$ if no leader is elected, and it is $O\left(1 + \sum_{j=1}^k c_j \tilde{n}_j^{-1} \right) + O(\log k)$ if a leader is elected. These bounds hold in the worst case.

With our parameters, the energy cost for the  $i$th iteration is $O\left(1 + \sum_{j=1}^{d_i-1} 2 \cdot 2^{-j} \right) = O(1)$ if no leader is elected in this iteration, and it is $O\left(1 + \sum_{j=1}^{d_i-1} 2 \cdot 2^{-j} \right) + O(\log d_i) = O(i)$ if a leader is elected in this iteration. Once a leader is elected, the algorithm is terminated, so
 the energy cost of the entire algorithm is $O(i)$ if the algorithm is terminated by the end of the $i$th iteration as a leader is elected in the $i$th iteration. Therefore, the expected energy cost $E(n)$ of the algorithm can be upper bounded by $O\left( \sum_{i=0}^\infty f_i \right)$, and we have
\[E(n) = O\left( \sum_{i=0}^\infty f_i \right) = O(i^\ast) +  O\left(\sum_{j=0}^\infty 2^{-2-2j}\right) = O(i^\ast) = O(\log \log n).\]

We summarize the discussion as a theorem.

\begin{restatable}[{Leader election with an  unknown number of devices~\cite{lavault2007quasi}}]{theorem}{thmnocd}
\label{thm:algo_nocd}
There is an algorithm in the $\nocd$ model that elects a leader from an unknown number  $n \geq 2$ of devices using expected $O(\log n)$ time and expected $O(\log \log n)$ energy.
\end{restatable}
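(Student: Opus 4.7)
The plan is to instantiate the framework of \cref{sect-framework} with the doubling deadline schedule $d_i = 2^i$, and in iteration $i$ to invoke the multi-instance subroutine of \cref{lem:basic_alg_multi} on the list of pairs $(\tilde{n}, 2)$ for every dyadic estimate $\tilde{n} = 2^1, 2^2, \ldots, 2^{d_i - 1}$. First I would verify that this specification is admissible: the length of iteration $i$ is $\sum_{j=1}^{d_i - 1} 2 = O(d_i) = O(d_i - d_{i-1})$, which fits the framework's deadline budget.

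Next I would pin down the critical iteration. For a fixed but unknown network size $n \geq 2$, let $\tilde{n} = 2^{j^\ast}$ be the unique dyadic power with $\tilde{n}/2 < n \leq \tilde{n}$, and let $i^\ast$ be the smallest index with $d_{i^\ast} > j^\ast$. Then the pair $(\tilde{n}, 2)$ appears in every iteration $i \geq i^\ast$, so by \cref{lem:basic_alg_multi} each such iteration independently elects a leader with probability at least $1 - 2^{-2} = 3/4$. Consequently the probability $f_i$ that the algorithm is still running after iteration $i$ satisfies $f_i \leq 1$ for $i < i^\ast$ and $f_i \leq 2^{-2(i - i^\ast + 1)}$ for $i \geq i^\ast$, and by construction $d_{i^\ast} = 2^{i^\ast} = \Theta(\log n)$.

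The time analysis is then the standard geometric-deadline computation $T(n) \leq O(d_{i^\ast}) + \sum_{i \geq i^\ast} f_i \cdot O(d_{i+1} - d_i)$; factoring $2^{i^\ast}$ out and recognizing the residual $\sum_{k \geq 0} 2^k \cdot 2^{-2(k+1)}$ as a convergent geometric series gives $T(n) = O(2^{i^\ast}) = O(\log n)$. For the energy I would read off from \cref{lem:basic_alg_multi} and the framework that a single iteration costs $O(1)$ per device when no leader is elected there and $O(\log d_i) = O(i)$ per device when one is, so halting at iteration $i$ costs $O(i)$ total energy. The expected energy is therefore bounded by $\sum_i \Pr[\text{halt at iteration } i] \cdot O(i)$, which splits into an $O(i^\ast)$ contribution from $i \leq i^\ast$ and a tail $\sum_{k \geq 0} (i^\ast + k) \cdot 2^{-2k} = O(i^\ast)$, giving $E(n) = O(\log \log n)$.

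The one delicate point to get right is the geometric-sum accounting in both bounds: the per-iteration failure probability must decay fast enough that weighting the still-alive iterations by $d_{i+1} \asymp 2^{i+1}$ (for time) or by $i$ (for energy) still yields a series dominated by the terms near $i \simeq i^\ast$. This is exactly what fixes the constant $c = 2$ in the pair $(\tilde{n}, 2)$: choosing $c = 1$ would give per-iteration failure probability $1/2$ and leave $\sum_i 2^i \cdot f_i$ divergent, breaking the expected-time bound. Beyond that, the proof is a direct application of \cref{lem:basic_alg_multi} together with the per-iteration accounting already spelled out in the framework.
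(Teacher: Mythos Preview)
Your proposal is correct and follows exactly the same approach as the paper: the same deadline sequence $d_i = 2^i$, the same per-iteration parameter list $(\tilde{n},2)$ for $\tilde{n}=2^1,\ldots,2^{d_i-1}$, the same definition of $i^\ast$ and the same failure probabilities $f_i$, and the same geometric-series bookkeeping for $T(n)$ and $E(n)$. Your additional remark explaining why the constant $c=2$ (rather than $c=1$) is needed for the expected-time series to converge is a nice touch that the paper does not make explicit.
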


\section{An exponential improvement with sender collision detection} \label{sect:sender-main}

In this section, we apply the framework introduced in \cref{sect-rand-main} to prove \cref{thm:algo_sendercd}.
In \cref{sect:primitive-sender}, we describe a different leader election subroutine that is exponentially more energy-efficient in $\sendercd$. 
In \cref{sect-sendercd-UB} we show that in $\sendercd$ a leader can be elected in $O(\log^{1+\epsilon} n)$ time and using $O(\epsilon^{-1}\log \log \log n)$ energy in expectation, where $\epsilon > 0$ can be an arbitrarily small constant.

\subsection{A Subroutine for \texorpdfstring{$\sendercd$}{Sender-CD}} \label{sect:primitive-sender}

In this section, we prove the following lemma, which will be used in our $\sendercd$ algorithm in \cref{sect-sendercd-UB} to deal with small $\tilde{n}$-values.

\begin{lemma}[A subroutine for $\sendercd$]\label{lem:primitive-sender}
Let $d$ be a parameter known by all devices.
There is an  algorithm achieving the following goals in the $\sendercd$ model.
\begin{description}
    \item[Leader election:] The algorithm elects at most one leader. If the number $n$ of devices satisfies $n =  O(\log^2 d)$, then with probability at least $1 - d^{-3}$ a leader is elected.
    \item[Time:] The time complexity of the algorithm is $O(d)$, which is a fixed number independent of the random bits used by the algorithm. 
    \item[Energy:] If no leader is elected, then the energy cost is $O(1)$. If a leader is elected, then the energy cost is $O(\log \log d)$.
\end{description}
\end{lemma}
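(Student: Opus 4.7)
The plan is to combine a random identifier assignment with an appropriately tuned deterministic leader election algorithm for $\sendercd$ taken from~\cite{Chang2021detLE}.

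First, I would let each of the (up to $O(\log^2 d)$) devices draw an identifier uniformly at random from an identifier space $[N]$ of size $N = \Theta(d^3 \log^4 d)$. A birthday union bound over the at most $\binom{n}{2} = O(\log^4 d)$ pairs of devices gives that all identifiers are distinct with probability at least $1 - \binom{n}{2}/N \geq 1 - d^{-3}$; call this good event $\mathcal{E}$. This is an analogue, at the scale $n = O(\log^2 d)$, of the $n=2$ construction sketched in the technical overview, but now calibrated against the stronger failure bound $d^{-3}$.

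Second, I would run a deterministic $\sendercd$ leader election algorithm from~\cite{Chang2021detLE} on the identifier space $[N]$ with the a priori upper bound $\tilde n = \Theta(\log^2 d)$ on the number of active devices. The time-energy tradeoff available in that paper for $\sendercd$ should be calibratable so that, with $N$ polynomial in $d$ and $\tilde n$ polylogarithmic in $d$, the subroutine runs in $O(d)$ time and costs every participating device $O(\log\log d)$ energy; a bound of the shape $O(\tilde n\log(N/\tilde n))$ time paired with $O(\log\log(N/\tilde n))$ energy would suffice, since then the time is $O(\log^3 d) \subseteq O(d)$. Conditioned on $\mathcal{E}$, the subroutine elects a leader, and the $O(\log\log d)$ energy claim for the ``leader elected'' case follows directly.

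Third, the ``$O(1)$ energy if no leader is elected'' bound will come from grafting an early-termination mechanism onto the deterministic subroutine. I would view the subroutine as a sequence of $O(\log\log d)$ ``commitment levels'' of doubling energy and arrange for each device to advance from level $i$ to level $i+1$ only upon receiving a specific confirmation message from the tentative leader candidate at a pre-scheduled slot. Because in $\sendercd$ the tentative leader is able to identify itself via the transmitter-side collision feedback, it alone can decide to schedule such a message. When $\mathcal{E}^c$ occurs no tentative leader emerges, the confirmation at level $1$ is silent, and every device halts after $O(1)$ energy.

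The principal obstacle is this third step. Because $\sendercd$ listeners cannot tell silence from collision, the algorithm cannot use passive listening to rule out ``no leader''; all gating information must be pushed through the transmitter-side feedback channel. Verifying that the deterministic subroutine of~\cite{Chang2021detLE} can be reorganized into level-wise commitment with the tentative leader broadcasting confirmations, while still meeting the $O(d)$ time and $O(\log\log d)$ energy budgets, is the technical heart of the argument; everything else reduces to the standard birthday-bound and tradeoff bookkeeping.
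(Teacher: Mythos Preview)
Your high-level plan---random identifiers followed by a deterministic $\sendercd$ routine from~\cite{Chang2021detLE}---matches the paper's approach, and your birthday analysis is essentially the same (the paper uses $N = d^4$, but the difference is cosmetic). The divergence, and the gap in your proposal, is exactly where you flag it: step three.

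The paper does \emph{not} graft an early-termination mechanism onto a black-box subroutine. Instead it opens up the good-partitions algorithm of~\cite[Section~2.1]{Chang2021detLE} and chooses its parameters so that the $O(1)$-versus-$O(\log\log d)$ energy dichotomy falls out structurally. Concretely, the algorithm consists of $K = O(\tilde\epsilon^{-1}\log_b N)$ iterations, each with two parts: a cheap ``isolation test'' in which every device transmits once and, via the $\sendercd$ feedback, learns whether it is alone in its partition class (cost $O(1)$ energy), followed by an expensive $O(\log\log b)$-energy leader election run only among the isolated devices. The crucial observation is that if \emph{any} device passes the isolation test in some iteration, the second part of that iteration is guaranteed to elect a leader. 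Contrapositively, if no leader is ever elected, then no device ever ran the expensive part, and the total energy is $O(K)$. By taking $b = d$, $N = d^4$, $\tilde\epsilon = 1/2$, one gets $K = O(\log_b N) = O(1)$, so the ``no leader'' energy is $O(1)$ and the ``leader elected'' energy is $O(K + \log\log b) = O(\log\log d)$, with time $O(Kb) = O(d)$.

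Your ``commitment levels with tentative-leader confirmations'' scheme is circular as stated: producing a tentative leader who can broadcast already requires isolating a single transmitter, which is the problem you are trying to solve. The paper sidesteps this by making the isolation test itself a single $\sendercd$ transmission per iteration and bounding the number of iterations by a constant through the choice of $b$ and $N$. That parameter choice is the missing idea.
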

\begin{proof}
We show how the deterministic $\sendercd$ algorithm  in~\cite[Section 2.1]{Chang2021detLE} can be applied to achieve this goal.
We first review this deterministic  algorithm. 

\paragraph{Good partitions.}
Let $N$ be the size of the ID space $[N]$.
There are two parameters known by all device:   $b$ is a positive integer and $0 < \tilde{\epsilon} < 1$ is a constant.
Given the two parameters, it was shown in~\cite{Chang2021detLE} that there is a family of $K = O(\tilde{\epsilon}^{-1} \log_b N)$ partitions of $[N]$ into $b$ parts, $[N]=S_{1}^{(i)} \cup S_{2}^{(i)} \cup \cdots \cup S_{b}^{(i)}$ for each $i \in [K]$, meeting the following condition.
\begin{itemize}
    \item For each  $V \subseteq [N]$ with $1 \leq |V| \leq b^{1 - \tilde{\epsilon}}$, there exist $i \in [K]$ and $j \in [b]$ such that $\left|S_{j}^{(i)} \cap V\right| = 1$.
\end{itemize}

\paragraph{Deterministic algorithm.}
Traditionally in the deterministic setting we assume that each ID $x \in [N]$ is assigned to at most one device, but in our setting we have to allow the possibility that an  ID $x \in [N]$ is assigned to more than one device.

Using this family of partitions, the following algorithm elects a leader when the number of devices  assigned a unique identifier is at most $b^{1 - \tilde{\epsilon}}$. The algorithm has $K = O(\tilde{\epsilon}^{-1} \log_b N)$ iterations. Each iteration $i$ consists of two parts. 
\begin{description}
    \item[First part:] The algorithm of the first part uses $b$ rounds. For $1 \leq j \leq b$, in the $j$th round we let each device $v$ with $\ID(v) \in S_{j}^{(i)}$ transmits simultaneously, so $v$ learns whether $v$ is the only device with $\ID(v) \in S_{j}^{(i)}$ according to whether it  hears back its message. This is possible as we are in the $\sendercd$ model.
    \item[Second part:] The algorithm of the second part runs the 
     $O(N')$-time and $O(\log \log N')$-energy deterministic $\sendercd$ leader election algorithm of~\cite{ChangKPWZ17} with $N' = b$.
    For each $j \in [b]$, if there is exactly one device $v$ with $\ID(v) \in S_{j}^{(i)}$, then $v$ participates in the leader election algorithm using the new ID $j$. 
\end{description}
If a leader is successfully elected in the second part, then the algorithm terminates, otherwise the algorithm proceeds to the next iteration. We analyze this deterministic algorithm.

\begin{description}
    \item[Time:] In each iteration, both the first and the second parts costs $O(b)$ time, so the overall time complexity is $O(Kb) = O(b \tilde{\epsilon}^{-1} \log_b N)$.
    \item[Energy:] In each iteration, if a device does not participate in the leader election algorithm in the second part, then its energy cost is $O(1)$, otherwise its energy cost is $O(\log \log b)$. Since the algorithm terminates once a leader is elected, the overall energy cost is $O(K) =  O(\tilde{\epsilon}^{-1} \log_b N)$ if no leader is elected, otherwise the overall energy cost is $O(K + \log \log b) = O(\tilde{\epsilon}^{-1} \log_b N + \log \log b)$.
    \item[Correctness:] It is clear that at most one leader is elected by the algorithm. If the number of devices  assigned a unique identifier is at most $b^{1 - \tilde{\epsilon}}$, then it is guaranteed that a leader is elected, as there exist $i \in [K]$ and $j \in [b]$ such that there is exactly one device $v$ with $\ID(v) \in S_{j}^{(i)}$, so there will be at least one device participating in the leader election algorithm of the second part of iteration $i$.
\end{description}

\paragraph{Randomized algorithm.}
The idea of our randomized algorithm is simply that we choose $N = d^{4}$ and let each device select an ID from the ID space $[N]$ uniformly at random to run the above deterministic algorithm with $b = d$ and $\tilde{\epsilon}=1/2$, so the time complexity is $O(b \tilde{\epsilon}^{-1} \log_b N) = O(d)$, and the energy complexity is $O(\tilde{\epsilon}^{-1} \log_b N) = O(1)$ if no leader is elected, otherwise the overall energy cost is $O(\tilde{\epsilon}^{-1} \log_b N + \log \log b) = O(\log \log d)$.

The algorithm elects at most one leader. Suppose the number $n$ of devices is already at most $n =  O(\log^2 d) \ll \sqrt{d} = b^{1 - \tilde{\epsilon}}$ Then the only reason that no leader is elected is that there are repeated identifiers. The probability that there exist two devices choosing the same identifier is at most $n^2/ N = O(d^{-4} \log^2 d) \ll d^{-3}$, as required.
\end{proof}

\subsection{An \texorpdfstring{$O(\log^{1+\epsilon} n)$}{O(log n to the power of 1+eps)}-time and \texorpdfstring{$O(\epsilon^{-1} \log \log \log n)$}{O(log log log n)}-energy Algorithm in \texorpdfstring{$\sendercd$}{Sender-CD}} \label{sect-sendercd-UB}

We design a leader election algorithm that takes $O(\log^{1+\epsilon} n)$ time and $O(\epsilon^{-1} \log \log \log n)$ energy in expectation in $\sendercd$, for any $0 < \epsilon \leq O(1)$. 
The algorithm incorporates \cref{lem:primitive-sender} into our framework described in \cref{sect-framework}. Intuitively, \cref{lem:primitive-sender} will be used in place of \cref{lem:basic_alg_multi} to deal with network size estimates $\tilde{n}$ that are sufficiently small.
We consider the following parameters.
\begin{align*}
    d_i &= 
    \begin{cases}
2^{\left\lceil(1+\epsilon)^i\right\rceil}, & i \geq 1,\\
0, & i=0.\\
\end{cases}
\\
n_j &= 2^j.\\
m_{i,j}^\ast &=  \max\left\{0, 2\left\lceil\log d_i - \log \log n_j\right\rceil\right\}.\\
m_{i,j} &= \min\left\{\left\lceil 2^{j/2}\right\rceil, m_{i,j}^\ast \right\}.\\
c_{i,j} &= m_{i,j} - m_{i-1, j}.
\end{align*}

\begin{description}
    \item[The deadline sequence:] The deadline sequence of our algorithm is set to be $(d_1, d_2, \ldots)$. This choice of the deadlines $i = \Theta(\epsilon^{-1} \log \log d_i)$ is exponentially sparser than the one $i = \Theta(\log d_i)$ in the $\nocd$ algorithm of \cref{thm:algo_nocd}. 
    This change is needed because here we aim for an exponential improvement in the energy complexity over the algorithm of \cref{thm:algo_nocd}. In our framework, all devices have to spend at least one unit of energy in each iteration until a leader is elected. Since we know that leader election needs $\Omega(\log n)$ time to solve in expectation~\cite{Newport14}, it is necessary that the deadline sequence satisfies $i = O(\epsilon^{-1} \log \log d_i)$ in order to achieve an expected energy complexity of $O(\epsilon^{-1} \log \log \log n)$.
    \item[The parameters for each iteration:] For each iteration $i$ and each positive integer $j$,  the pair $(\tilde{n}_j, c_{i,j})$ is added to the specification of the algorithm of the $i$th iteration if $c_{i,j} \neq 0$. Hence \[2^{-m_{i,j}} = \prod_{l=1}^i 2^{-c_{l,j}}\] is an upper bound on the probability that no leader is elected by the end of iteration $i$ when  $\tilde{n}_j/2 < n \leq  \tilde{n}_j$. 
    We briefly compare the choice of parameters with the parameters in our $\nocd$ algorithm of \cref{thm:algo_nocd}. 
    In our $\nocd$ algorithm, for each iteration $i$ and each network size estimate $n_j = 2^j$, $(n_j, 2)$ is included in the list of pairs if $j < d_i$. Since $d_i = 2^i$ in our $\nocd$ algorithm, 
     the probability that no leader is elected by the end of iteration $i$ when  $\tilde{n}_j/2 < n \leq  \tilde{n}_j$ is at most
    \[2^{-m_{i,j}^\ast} = 2^{-\max\left\{0, 2\left\lceil\log d_i - \log \log n_j\right\rceil\right\}}.\]
 Therefore, in a sense, the only change we introduce in the $\sendercd$ model is to make $m_{i,j} = \min\left\{\left\lceil 2^{j/2}\right\rceil, m_{i,j}^\ast \right\}$ upper bounded by 
 $\left\lceil 2^{j/2}\right\rceil$. 
 Intuitively, this means that once the failure probability associated with a network size estimates $\tilde{n}_j$ becomes sufficiently small $2^{-{\left\lceil 2^{j/2}\right\rceil}}$, we stop further improving the failure probability using \cref{lem:basic_alg_multi}.
 As we will later see, for those network size estimates $\tilde{n}_j$, we will switch to a different approach based on \cref{lem:primitive-sender} to reduce the failure probability to the desired bound $m_{i,j}^\ast$.
 Observe that these network size estimates $\tilde{n}_j$ are necessarily small.   Indeed, $m_{i,j} = \left\lceil 2^{j/2}\right\rceil$ implies that $\tilde{n}_j = O(\log^2 d_i)$, as  $2^{j/2} = \sqrt{\tilde{n}_j}$.
\end{description}

We verify that the time complexity of each iteration $i$ is $O(d_i) = O(d_i - d_{i-1})$, so our choice of the deadline sequence is valid. According to the time complexity analysis in \cref{sect-framework}, the time complexity of iteration $i$ is linear in \[\sum_{j=1}^{\infty} c_{i,j} 
\leq \sum_{j=1}^{\infty}  m_{i,j} 
\leq  \sum_{j=1}^{\infty} m_{i,j}^\ast 
= \sum_{j=1}^{d_i - 1} m_{i,j}^\ast =  \sum_{j=1}^{d_i - 1} \left\lceil\log d_i - \log j \right\rceil  = O(d_i),\] 
where we use the fact that $m_{i,j}^\ast = 0$ for each $j \geq d_i$ and $m_{i,j}^\ast = \left\lceil\log d_i - \log j \right\rceil$ for each $1 \leq j < d_i$.

\paragraph{Modifications.} We  make the following  modifications to the framework described in \cref{sect-framework}.
\begin{itemize}
    \item For the second part of the algorithm of one iteration, we use the $O(N)$-time and $O(\log \log N)$-energy $\sendercd$ algorithm of~\cite{ChangKPWZ17}, which is exponentially more efficient than the $\nocd$ algorithm, so the term $O(\log k)$ in the energy complexity analysis in \cref{sect-framework} can be replaced by $O(\log \log k)$.
    \item As discussed earlier, in the algorithm for iteration $i$, we will invoke the algorithm of \cref{lem:primitive-sender} with $d = d_i$. The objective is to ensure that by the end of   iteration $i$, for each network size estimate $\tilde{n}_j = 2^j$,  if $\tilde{n}_j/2 < n \leq   \tilde{n}_j$, then the probability that no leader is elected is at most $2^{-m_{i,j}^\ast}$ by the end of iteration $i$. If $\tilde{n} = \Omega(\log^2 d_i)$ is sufficiently large, then we already have $m_{i,j}^\ast = m_{i,j}$. As we know that $2^{-m_{i,j}}$ is an upper bound on the probability that no leader is elected by the end of iteration $i$ if $\tilde{n}_j/2 < n \leq   \tilde{n}_j$, we only need to focus on the case $n = O(\log^2 d_i)$. Since $2^{-m_{i,j}^\ast} \geq  2^{-2 \ceil{\log d_i}} = (1/2) \cdot d_i^{-2} \geq d_i^{-3}$, it suffices to design an algorithm that elects a leader with probability at least $1 - d_i^{-3}$ when $n = O(\log^2 d_i)$.
    Hence we may use the algorithm of \cref{sect:primitive-sender} with $d = d_i$ to achieve this goal. The algorithm costs $O(d_i) = O(d_i - d_{i-1})$ time, which is within the time constraint allocated for the $i$th iteration of the algorithm. If no leader is elected, then the energy cost is $O(1)$, otherwise the energy cost is $O(\log \log d_i)$. 
\end{itemize}


\paragraph{Analysis.}
Suppose  we run the algorithm on a network of $n \geq 2$ devices. 
 For the rest of the section, we analyze the expected time $T(n)$ and energy $E(n)$ complexities of our algorithm. 
 Similar to the analysis in \cref{sect-nocd-UB}, we pick $\tilde{n}=2^{j^\ast}$ to be the number such that $\tilde{n}/2 < n \leq \tilde{n}$ and $j^\ast$ is an integer, and
we pick $i^\ast$ to be the smallest index $i$ such that $\log \tilde{n} < d_i$. It is clear that $d_{i^\ast} = O(\log^{1+\epsilon}n)$ and $i^\ast = O(\epsilon^{-1} \log \log \log n)$.
Due to the use of the algorithm of \cref{sect:primitive-sender} in the above modification, 
\[ 
f_i = 2^{-m_{i,j^\ast}^\ast} =
\begin{cases}
1, & i < i^\ast,\\
2^{- 2\left\lceil\log d_i - \log j^\ast\right\rceil} & i \geq i^\ast
\end{cases}
\] 
is an upper bound on the probability that no leader is elected by the end of the $i$th iteration.

\paragraph{Expected time complexity.} 
The expected time complexity of the algorithm can be upper bounded as follows.
\begin{align*}
    T(n) &= \sum_{i=1}^\infty f_i \cdot O(d_i - d_{i-1})\\
    &= O(d_{i^\ast}) + \sum_{i = i^\ast + 1}^\infty f_i \cdot O(d_i - d_{i-1})\\
    &=O(d_{i^\ast}) + \sum_{i = i^\ast + 1}^\infty O\left(2^{- 2\left(\log d_i - \log j^\ast\right)} \cdot d_i\right)
    & f_i = 2^{- 2\left\lceil\log d_i - \log j^\ast\right\rceil} \  \text{ for } \  i \geq i^\ast
    \\
    &=O(d_{i^\ast}) + \sum_{i = i^\ast + 1}^\infty O\left(2^{- 2\left(\log d_i - \log d_{i^\ast}\right)} \cdot d_i\right)
    & d_{i^\ast} > \log \tilde{n} = j^\ast\\
    &=O(d_{i^\ast}) \cdot O\left(1 + \sum_{i = i^\ast + 1}^\infty  2^{-\left(\log d_i - \log d_{i^\ast}\right)} \right)\\
    &=O(d_{i^\ast}) \cdot O\left(1 + \sum_{i = i^\ast + 1}^\infty  2^{-(1+\epsilon)^i+ (1+\epsilon)^{i^\ast}} \right)
    & d_i = 2^{\left\lceil(1+\epsilon)^i\right\rceil}\\
    &=O(d_{i^\ast})\\
    &=O\left(\log^{1+\epsilon}n\right).
\end{align*}




\paragraph{Expected energy complexity.}
The energy cost for the algorithm of one iteration depends on whether a leader is elected in this iteration. For iteration $i$, the list of pairs are $(\tilde{n}_j, c_{i,j})$ over all $1 \leq j < d_i$ such that $c_{i,j} \neq 0$. Hence the energy cost is $O\left(1 + \sum_{j=1}^{d_i - 1} c_{i,j} \tilde{n}_j^{-1} \right)$ if no leader is elected, and it is $O\left(1 + \sum_{j=1}^{d_i - 1} c_{i,j} \tilde{n}_j^{-1} \right) + O(\log \log d_i)$ if a leader is elected. 
We first show that the summation of the  energy cost due to the term $\sum_{j=1}^{d_i - 1} c_{i,j} \tilde{n}_j^{-1}$ is $O(1)$, so it can be ignored in the subsequent analysis. The summation of this term over all iterations $i$ is at most \[\sum_{i=1}^\infty \sum_{j=1}^\infty c_{i,j} \tilde{n}_j^{-1} = \sum_{j=1}^\infty \left\lceil 2^{j/2} \right\rceil  \cdot \tilde{n}_j^{-1} = O(1),\] as  $\sum_{i=1}^{\infty} c_{i,j} 
= \left\lceil 2^{j/2} \right\rceil$ and $\tilde{n}_j = 2^j$.
Therefore, if a leader is elected in iteration $\bar{i}$, then the energy cost of the algorithm is at most
\[O(1) + \sum_{i = 1}^{\bar{i}} O(1) + O(\log \log d_{\bar{i}}) = O(\bar{i}).\] 
In the calculation, the first term $O(1)$ captures the energy cost due to the term $\sum_{j=1}^{d_i - 1} c_{i,j} \tilde{n}_j^{-1}$ over all $i$, the second term $\sum_{i = 1}^{\bar{i}} O(1) = O(\bar{i})$ is the dominant term because $i =  \Theta(\epsilon^{-1} \log \log d_i)$.

Since the expected energy complexity $E(n)$ is linear in the number of iterations until a leader is elected, $E(n)$ can be upper bounded as follows.
\begin{align*}
    E(n) &= O\left(\sum_{i=1}^\infty f_i\right)\\
    &= O(i^\ast) + O\left(\sum_{i = i^\ast + 1}^\infty 2^{- 2\left(\log d_i - \log j^\ast\right)}\right)\\
    &= O(i^\ast) + O\left(\sum_{i = i^\ast + 1}^\infty 2^{- 2\left(\log d_i - \log d_{i^\ast}\right)}\right)\\
    &= O(i^\ast) + O\left(\sum_{i = i^\ast + 1}^\infty 2^{-2(1+\epsilon)^i+ 2(1+\epsilon)^{i^\ast}}\right)\\
    &= O(i^\ast)\\
    &= O(\epsilon^{-1} \log \log \log n).
\end{align*}

We conclude the proof of \cref{thm:algo_sendercd}.

\thmcd*

\addcontentsline{toc}{section}{References}
\bibliographystyle{alpha}
\bibliography{ref}

\appendix

\section{Proof of \texorpdfstring{\cref{lem:logn_time_lb}}{Lemma \ref{lem:logn_time_lb}}}\label{app:missing-proof}

In this section we provide a full proof of \cref{lem:logn_time_lb}.
In the proof, we consider the following scenario. Let $N \geq 2$ be a positive integer. We are given $N$ devices, each with a unique ID in $[N]$. We assume that the number $N$ is known to all devices. At the beginning of an algorithm $\mathcal{A}$, a subset $H$ of the $N$ devices are activated. The subset $H$ is unknown, except that it is guaranteed to have size $\sqrt{N} \leq |H| \leq N$.

As we do not consider the energy complexity, we may assume that each device either transmits or listens in each round. 
We assume that once a collision-free transmission occurs, the algorithm terminates. Indeed, collision-free transmission occurs in a round $t$ if there is exactly one device $v$ transmitting in the round $t$, so $v$ may identify itself as a leader, and all other devices may identify themselves as non-leaders.

In $\sendercd$ and $\nocd$, the feedback from the communication channel is always silence if no successful transmission occurs.
As discussed earlier, the action of a device $v$ in a round depends only on its unique identifier $\ID(v)$ and its private random bits. 
Therefore, a $t$-time deterministic algorithm can be described by a function $\phi$ that maps each $x \in [N]$ to an infinite sequence of actions $(a_1(x), a_2(x), \ldots, a_t(x))$, where each $a_i(x) \in \{\transmit, \listen\}$ indicates the action of a device $v$ with $\ID(v) = x$ in round $i$ given that no successful transmission occurs in the first $i-1$ rounds. If the algorithm is randomized, then the mapping is randomized. 

\paragraph{Deterministic algorithms.} Given a parameter $0 < f < 1$ and a multiset $\mathcal{H}$ of subsets $H \subseteq N$ with $\sqrt{N} \leq |H| \leq N$, we say that a  deterministic algorithm $\mathcal{A}$ succeeds with probability $1-f$ w.r.t.~$\mathcal{H}$  if a collision-free transmission occurs  with probability at least $1-f$ when we run $\mathcal{A}$ on a random subset of devices $H \in \mathcal{H}$.
We write $T^\deterministic_{\mathcal{H}, f}$ to denote the minimum number $t$ such that there exists a $t$-time deterministic algorithm $\mathcal{A}$  that  succeeds with probability $1-f$ w.r.t.~$\mathcal{H}$.

\paragraph{Randomized algorithm.}
Given a parameter $0 < f < 1$ and a subset $H \subseteq N$ with $\sqrt{N} \leq |H| \leq N$, we say that a  randomized algorithm $\mathcal{A}$ succeeds with probability $1-f$ w.r.t.~$H$ if a collision-free transmission occurs  with probability at least  $1-f$ when we run $\mathcal{A}$ on the subset of devices $H \in \mathcal{H}$.

We write $T^\randomized_{f}$ to denote the minimum number $t$ such that there exists a $t$-time randomized algorithm $\mathcal{A}$ such that $\mathcal{A}$ succeeds with probability $1-f$ w.r.t.~$H$ for all subsets $H \subseteq N$ with $\sqrt{N} \leq |H| \leq N$.

Yao's minimax principle states that a lower bound of deterministic algorithms for any choice of input distribution implies a lower bound of randomized algorithms, so we have the following lemma.

\begin{lemma}\label{lem:rand-to-det}
For each multiset $\mathcal{H}$ of subsets $H \subseteq N$ with $\sqrt{N} \leq |H| \leq N$, we have \[T^\deterministic_{\mathcal{H}, f} \leq T^\randomized_{f}.\]
\end{lemma}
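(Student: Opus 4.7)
The plan is to prove this via the standard Yao's minimax argument: a randomized algorithm is a distribution over deterministic algorithms, so averaging over the distribution of inputs commutes with averaging over the random bits.

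First, let $t = T^\randomized_{f}$ and fix a randomized $t$-time algorithm $\mathcal{A}$ that succeeds with probability at least $1 - f$ w.r.t.~every single subset $H \subseteq [N]$ with $\sqrt{N} \leq |H| \leq N$. Using the description of an algorithm given earlier in the excerpt, $\mathcal{A}$ is specified by a distribution $\mathcal{D}$ over mappings $\phi : [N] \to \{\transmit, \listen\}^t$. Each fixed $\phi \in \supp(\mathcal{D})$ yields a deterministic $t$-time algorithm $\mathcal{A}_\phi$. For any subset $H$, let $\mathbbm{1}[\text{success}(\mathcal{A}_\phi, H)]$ denote the indicator that a collision-free transmission occurs when $\mathcal{A}_\phi$ is executed on $H$. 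Then
\[
\Pr[\text{success}(\mathcal{A}, H)] = \E_{\phi \sim \mathcal{D}}\bigl[\mathbbm{1}[\text{success}(\mathcal{A}_\phi, H)]\bigr] \geq 1 - f \quad \text{for every valid } H.
\]

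Next, fix an arbitrary multiset $\mathcal{H}$ of subsets $H \subseteq [N]$ with $\sqrt{N} \leq |H| \leq N$. Taking expectation of the above inequality over a uniformly random $H \in \mathcal{H}$ and swapping the order of expectations (by linearity, as both are over finite sample spaces), I get
\[
\E_{\phi \sim \mathcal{D}} \E_{H \sim \mathcal{H}} \bigl[\mathbbm{1}[\text{success}(\mathcal{A}_\phi, H)]\bigr] = \E_{H \sim \mathcal{H}} \Pr[\text{success}(\mathcal{A}, H)] \geq 1 - f.
\]
Since the outer expectation over $\phi$ is at least $1-f$, there must exist a particular $\phi^* \in \supp(\mathcal{D})$ for which $\E_{H \sim \mathcal{H}}[\mathbbm{1}[\text{success}(\mathcal{A}_{\phi^*}, H)]] \geq 1 - f$. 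This $\mathcal{A}_{\phi^*}$ is a deterministic $t$-time algorithm that succeeds with probability at least $1 - f$ w.r.t.~$\mathcal{H}$, so by definition $T^\deterministic_{\mathcal{H}, f} \leq t = T^\randomized_{f}$.

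There is essentially no obstacle here: the statement is a clean instance of Yao's principle, and the only care needed is to match the definitions given in the excerpt (success measured by occurrence of a collision-free transmission, and the worst-case-over-$H$ guarantee for randomized algorithms versus the distributional guarantee for deterministic ones). I would just make sure to note that $\mathcal{A}_{\phi^*}$ inherits the time bound $t$ from $\mathcal{A}$ since every mapping $\phi \in \supp(\mathcal{D})$ has length-$t$ action sequences, and that the averaging argument produces a deterministic $\phi^*$ rather than merely a randomized one.
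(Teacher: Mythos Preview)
Your proposal is correct and follows essentially the same approach as the paper: both apply Yao's minimax principle by viewing the randomized algorithm as a distribution over deterministic algorithms, averaging the success probability over $H \in \mathcal{H}$, and extracting a single deterministic algorithm $\mathcal{A}_{\phi^*}$ whose success fraction on $\mathcal{H}$ is at least $1-f$. The paper's proof is just a more compressed version of your argument.
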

\begin{proof}
Let $\mathcal{A}^\randomized$ be a randomized algorithm that takes $T^\randomized_{f}$ time and succeeds with probability $1-f$ w.r.t.~$H$ for all subsets $H \subseteq N$ with $\sqrt{N} \leq |H| \leq N$. If we fix the random bits used by $\mathcal{A}^\randomized$, then we obtain a deterministic algorithm $\mathcal{A}^\deterministic$. Furthermore, the expected number of $H \in \mathcal{H}$ such that a collision-free transmission occurs when we run $\mathcal{A}^\deterministic$ on the subset of devices $H$ is at least $(1-f)|\mathcal{H}|$. Hence there exists a way of fixing the random bits such that $\mathcal{A}^\deterministic$ succeeds with probability $1-f$ w.r.t.~$\mathcal{H}$, so $T^\deterministic_{\mathcal{H}, f} \leq T^\randomized_{f}$.
\end{proof}

\paragraph{Hitting sets.}
We use the following  terminologies from \cite{ablp91,Newport14}.
 For any two sets $F \subseteq [N]$ and  $H \subseteq [N]$, we say that $F$ hits $H$ if $|F \cap H| = 1$. For any multiset $\mathcal{F}$ of subsets of $[N]$, we say that $\mathcal{F}$ hits the set $H$ if there exists a set $F \in \mathcal{F}$ that hits $H$.

 \begin{lemma}\label{lem:det-aux}
 $T^\deterministic_{\mathcal{H}, f} > t$ if there exists a multiset $\mathcal{H}$ of subsets $H \subseteq N$ with $\sqrt{N} \leq |H| \leq N$ such that each size-$t$ multiset $\mathcal{F}$ of subsets of $[N]$  hits at most a $(1-f)$ fraction of  $\mathcal{H}$.
 \end{lemma}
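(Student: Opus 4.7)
The plan is to prove the lemma by contrapositive, via the standard encoding that turns any $t$-round deterministic algorithm in $\nocd$ (or $\sendercd$) into a size-$t$ multiset of transmitter sets. I first unpack the description of a deterministic $t$-round algorithm $\mathcal{A}$ given in the paragraph preceding the lemma: $\mathcal{A}$ is specified by a mapping $\phi$ that sends each identifier $x \in [N]$ to a sequence $(a_1(x), \ldots, a_t(x)) \in \{\transmit, \listen\}^t$ of actions the device with identifier $x$ would take assuming no collision-free transmission has yet occurred. Crucially, because in $\nocd$/$\sendercd$ the channel feedback is always silence in every such round, this mapping is the same no matter which subset $H$ is actually activated, so $\phi$ may be extracted independently of the input.

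From $\phi$ I extract, for each $i \in [t]$, the transmitter set $F_i \defeq \{x \in [N] : a_i(x) = \transmit\}$, and form the size-$t$ multiset $\mathcal{F} \defeq \{F_1, \ldots, F_t\}$. When $\mathcal{A}$ runs on an input $H \subseteq [N]$, the set of devices that transmit in round $i$ (before any earlier successful round terminates the execution) is precisely $F_i \cap H$, so a collision-free transmission occurs in round $i$ iff $|F_i \cap H| = 1$, i.e., iff $F_i$ hits $H$ in the sense used in the paper. Hence $\mathcal{A}$ produces a collision-free transmission within $t$ rounds on input $H$ exactly when the multiset $\mathcal{F}$ hits $H$, and consequently the success probability of $\mathcal{A}$ on a uniformly random $H \in \mathcal{H}$ equals the fraction of $\mathcal{H}$ that is hit by $\mathcal{F}$.

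To finish by contrapositive: suppose $T^\deterministic_{\mathcal{H}, f} \leq t$. Then by definition there is some $t$-round deterministic $\mathcal{A}$ whose success probability with respect to $\mathcal{H}$ is at least $1-f$, so its associated size-$t$ multiset $\mathcal{F}$ hits at least a $(1-f)$ fraction of $\mathcal{H}$, contradicting the hypothesis of the lemma. Therefore $T^\deterministic_{\mathcal{H}, f} > t$, as claimed. I do not foresee any real obstacle here; this is essentially a routine encoding argument, and the only point requiring care is verifying that the transmitter sets $F_i$ depend only on $\phi$ and not on the hidden input $H$, which is precisely where the $\nocd$/$\sendercd$ silence-before-first-success property is invoked.
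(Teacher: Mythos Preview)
Your proposal is correct and follows essentially the same approach as the paper: encode a $t$-round deterministic algorithm as the size-$t$ multiset $\mathcal{F}=\{F_1,\ldots,F_t\}$ of transmitter sets $F_i=\{x\in[N]:a_i(x)=\transmit\}$, observe that a collision-free transmission occurs on input $H$ iff $\mathcal{F}$ hits $H$, and conclude by contrapositive. Your write-up is more explicit about why the $F_i$ are input-independent and about the contrapositive step, but the argument is the same.
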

\begin{proof}
This lemma follows from the following interpretation: A $t$-time deterministic algorithm $\mathcal{A}$ can be interpreted as a size-$t$ multiset $\mathcal{F}=\{F_1, F_2, \ldots, F_t\}$ of subsets of $[N]$  by setting $F_i = \{ x \in [N] \ | \ a_i^x = \text{transmit}\}$. For any subset $H \subseteq N$ with $\sqrt{N} \leq |H| \leq N$, it is clear that  a collision-free transmission occurs when we run $\mathcal{A}$ on $H$ if and only if the corresponding multiset $\mathcal{F}$ hits $H$.
\end{proof}

The variant of \cref{lem:constant-f} without the constraint $\sqrt{N} \leq |H| \leq N$ was proved in~\cite{alon2014broadcast}, see~\cite{Newport14}. The same proof extends easily to our setting. For the sake of completeness, we still provide a proof  of \cref{lem:constant-f} here.

\begin{lemma}
\label{lem:constant-f}
Given an integer $N \geq 2$, there exists a multiset $\mathcal{H}$ of subsets $H \subseteq [N]$ with  $\sqrt{N} \leq |H| \leq N$  such that every subset $F \subseteq [N]$  hits at most  $O\left(\frac{1}{\log N}\right)$ fraction of $\mathcal{H}$.
\end{lemma}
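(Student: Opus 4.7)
The plan is to use the probabilistic method: construct $\mathcal{H}$ as a union of $\Theta(\log N)$ geometrically spaced ``size layers,'' each populated with many uniformly random subsets of $[N]$ of a fixed size. Concretely, let $L=\lfloor \tfrac12\log_2 N\rfloor$ and, for $j\in\{0,1,\ldots,L\}$, set $s_j=\lceil 2^{j}\sqrt{N}\rceil$ so that $\sqrt{N}\le s_j\le N$. Let $\mathcal{H}_j$ consist of $M=N^{3}$ independently sampled uniform subsets of $[N]$ of size exactly $s_j$, and set $\mathcal{H}=\bigcup_{j=0}^{L}\mathcal{H}_j$, a multiset of total size $(L+1)M=\Theta(M\log N)$.

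Fix any $F\subseteq[N]$ with $|F|=k\ge 1$, and fix a layer $j$. If $H$ is uniform of size $s_j$, then $X_j:=|F\cap H|$ is hypergeometric with mean $\mu_j:=ks_j/N$, and
\[
p_j:=\Pr[X_j=1]\;\le\;\min\!\bigl\{\mu_j,\ e^{-\Omega(\mu_j)}\bigr\},
\]
where the first bound is Markov's inequality $\Pr[X_j\ge 1]\le \E[X_j]$, and the second follows from a Hoeffding-type tail bound for hypergeometric variables once $\mu_j$ exceeds a sufficiently large constant. Because the $s_j$ form a geometric progression, splitting the sum at the (at most one) index $j^\star$ with $\mu_{j^\star}\in[1,2]$ yields two geometric series, each dominated by a constant, so $\sum_{j=0}^{L}p_j=O(1)$. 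Consequently, in expectation, $F$ hits $\sum_{j} M\cdot p_j=O(M)$ sets of $\mathcal{H}$, which is an $O(1/\log N)$ fraction of $|\mathcal{H}|$.

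To turn this into an ``every $F$'' statement, I will apply a Chernoff bound to each fixed $F$: the number of sets in $\mathcal{H}$ that $F$ hits is a sum of $(L+1)M$ independent $\{0,1\}$ indicators with mean $O(M)$, so it exceeds $CM$ (for a suitable constant $C$) with probability at most $e^{-\Omega(M)}=e^{-\Omega(N^{3})}$. A union bound over the $2^N$ choices of $F\subseteq[N]$ leaves strictly positive probability that every $F$ simultaneously hits at most $CM$ sets, so some realization of $\mathcal{H}$ witnesses the claim.

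The main obstacle is the tail estimate in the middle step: merely using $\Pr[X_j=1]\le \mu_j$ is too weak, since summing $\mu_j=ks_j/N$ over layers with $s_j\gg N/k$ diverges geometrically. Control requires showing that $\Pr[X_j=1]$ decays \emph{exponentially} in $\mu_j$ once the mean is bounded away from $1$, which is exactly where the ``typical $|F\cap H|$ is much larger than $1$'' intuition is cashed in. I will obtain this by invoking Hoeffding's inequality for hypergeometric sampling (or, equivalently, by the standard coupling showing that sampling without replacement is at least as concentrated as with replacement, reducing to the i.i.d.\ Bernoulli tail bound).
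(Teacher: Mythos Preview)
Your proposal is correct and follows essentially the same probabilistic-method approach as the paper: geometrically spaced size layers, the bound $\Pr[|F\cap H|=1]=O(1)$ summed over layers, then Chernoff plus a union bound over all $2^N$ sets $F$. The only notable difference is that the paper quotes the sharper estimate $p_n=O\!\left(\frac{n|F|}{N}\exp\!\left(-\frac{n|F|}{N}\right)\right)$ directly from~\cite{alon2014broadcast}, whereas you derive the (weaker but sufficient) bound $p_j\le\min\{\mu_j,e^{-\Omega(\mu_j)}\}$ yourself via Markov and a multiplicative lower-tail bound for the hypergeometric distribution; both routes give $\sum_j p_j=O(1)$ and the rest of the argument is identical up to the inessential choice of $M=N^3$ versus $N^2$.
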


\begin{proof}
The multiset  $\mathcal{H}$ is constructed by the following randomized procedure. For each integer $j \in [\log (\sqrt{N}), \log N]$, add $N^2$ random size-$2^j$ subset $H \subseteq [N]$ to $\mathcal{H}$, chosen uniformly at random from all size-$2^j$ subsets of $[N]$.

Consider any fixed $F \subseteq [N]$, write $p_n$ to denote the probability that $F$ hits a random size-$n$ subset $H \subseteq [N]$. 
\[p_n = 
\begin{cases}
\frac{|F| \cdot \binom{N-|F|}{n-1}}{\binom{N}{n}} & \text{if $|F| \leq N-n+1$,}\\
0 & \text{otherwise.}
\end{cases}
\]
According to the calculation in~\cite{alon2014broadcast}, we have 
\begin{align*}
 p_n &= O\left( \frac{n |F|}{N} \cdot \exp\left( -\frac{n |F|}{N} \right)\right),\\
 \sum_{j \in  [\log \sqrt{N}, \log N]} p_{2^j} &\leq  \sum_{j \in  [\log N]} p_{2^j} = O(1).
\end{align*}
Therefore, for any fixed $F \subseteq [N]$, the expected number of subsets in  $\mathcal{H}$ hit by $F$ is 
\[
N^2 \cdot \sum_{j \in  [\log \sqrt{N}, \log N]} p_{2^j} = O(N^2) = O\left(\frac{1}{\log N}\right) \cdot |\mathcal{H}|.
\]
By a Chernoff bound, this bound $O\left(\frac{1}{\log N}\right) \cdot |\mathcal{H}|$ holds with probability at least $1 - e^{-\Omega(N^2)}$. Therefore, by a union bound over all $2^N = e^{o(N^2)}$ choices of $F \subseteq [N]$, we conclude that every  subset $F \subseteq [N]$  hits at least at most  $O(1 / \log N)$ fraction of $\mathcal{H}$ with positive probability, so such a multiset  $\mathcal{H}$ exists.
\end{proof}

Now we are ready to prove \cref{lem:logn_time_lb}.

\begin{proof}[Proof of \cref{lem:logn_time_lb}]
Observe that \cref{lem:constant-f} implies that in order for a multiset $\mathcal{F}$ to hit at least a $(1-f)$ fraction of $\mathcal{H}$, then it is necessary that 
\[|\mathcal{F}| = \frac{1-f}{O(1 / \log N)} = \Omega((1-f) \log N).\]
For the case $f = 3/4$, we have $1-f = 1/4$, so 
\cref{lem:rand-to-det,lem:det-aux} imply that $T^\randomized_{f} \geq T^\deterministic_{\mathcal{H}, f} = \Omega(\log N)$.
To see that this bound implies \cref{lem:logn_time_lb}, we simply set $N = \tilde{n}$ and $f = 3/4$, as we observe that any lower bound in the setting of having identifiers also applies to the setting of having no identifiers.
\end{proof}

\end{document}